\DeclareUrlCommand\url{\color{blue}}
\pgfplotsset{compat=newest}
\theoremstyle{definition}
\newtheorem{theorem}{Theorem}[section]
\newtheorem{corollary}[theorem]{Corollary}
\newtheorem{definition}[theorem]{Definition}
\newtheorem{example}[theorem]{Example}
\newtheorem{lemma}[theorem]{Lemma}
\newtheorem{proposition}[theorem]{Proposition}
\newtheorem{remark}[theorem]{Remark}
\def\kat#1{{\mathscr{#1}}}
\def\A{\kat{A}}
\def\B{\kat{B}}
\def\M{\kat{M}}
\def\N{\kat{N}}
\newcommand{\V}{\mathcal{V}}
\def\mon{\mathtt{M}}
\def\act{\mathbf{Act}}
\def\mact{\M\hspace{-1.5pt}\mbox{-}\hspace{.5pt}{\mathbf{Act}}}
\newcommand{\set}{\mathbf{Set}}
\newcommand{\cat}{\mathbf{Cat}}
\newcommand{\id}{\mathsf{id}}
\newcommand{\op}{^{\mathsf{op}}}
\newcommand{\st}{\mathsf{st}}
\newcommand{\cst}{\mathsf{cst}}
\newcommand{\optic}{\mathbf{Optic}}
\newcommand{\one}{\mathbb 1}
\newcommand{\two}{\mathbb 2}
\renewcommand{\to}{\longrightarrow}
\newcommand{\nto}{\nrightarrow}
\begin{document}

%%===============================%

\title{On Graded Monads, Distributive Laws and Costrong Functors}

%%===============================%
%\author{}

\author{Adriana Balan\thanks{Support by the research project GNaC2023 ARUT 88/11.10.2023 {\em Tensor Products and Nuclearity} is \linebreak acknowledged.} \thanks{Email: {\tt adriana.balan@upb.ro}. Department of Mathematical Methods and Models 
\& Fundamental Sciences Applied in Engineering Research Center, National University of Science and Technology POLITEHNICA Bucharest, Romania.}
\qquad
Silviu-George Pantelimon\thanks{Email: silviu.pantelimon@upb.ro. Department of Computer Science, National University of Science and \linebreak Technology POLITEHNICA Bucharest, Romania.}}

\maketitle

%%===============================%

{\small{{\bf Abstract.} 
Strong functors and monads are ubiquitous in Computer Science. 
More recently, (strong) comonads have demonstrated their use in structuring context-dependent notions of computation.
However, the dualisation of ``being strong'' property passed somehow unobserved so far. 
We argue that ``being costrong'' gives a different understanding of how functors can interact with monoidal structures.
We shall see that the well-known correspondence between distributive laws $FT \to TF$ of an endofunctor $F$ over a monad $T$, on one hand, and extensions of $F$ to the Kleisli category of that monad, on the other hand, generalises from ordinary monads to graded ones. 
The gist here is to recognise that the costrength of a costrong functor is nothing but a ``graded'' distributive law. 
As such, ``being costrong'' is a structure that a functor may have. 
Examples of costrong functors with respect to different graded monads are provided, with emphasis to the cartesian case, and applications to optics and coalgebras are given. 
}}

%%===============================%

{\small{{\bf Keywords}. Graded monads, distributive laws, (co)strong functors, actions of monoidal categories, optics 

{\bf MSC2020}. 18A50 \ 18C15 \ 18C20 \ 18M05 \ 68N18 \ 68N30 \ 68Q55
}}

%%===============================%

\section{Introduction}

\paragraph{Graded monads and (co)strong functors.} 
A monoid $M$ in a monoidal category $(\M,\otimes,I)$ induces a {\tt Writer}-like monad $T_M = M \otimes -$ on $\M$ expressing how effects of type $M$ may accumulate during computations. 
Letting $M$ run through all objects of $\M$ produces a family of endofunctors $T_M$ on $\M$, not necessarily all of them monads, but which, in their totality, still behave like one. 
That is, there are natural transformations $\eta: \id_\M \to T_I$, $\mu:T_M  T_N\to T_{M\otimes N}$ satisfying associativity and unit-like relations. 
Such a structure is called an $\M$-graded monad~\cite{FujiiKatsumataMelies-FoSSaCS2016,Katsumata2014}. 
Effects of {\em different} types $M$ and $N$ can now be combined using the monoidal structure into an effect of type $M \otimes N$. 

\medskip

A strong functor $F$ on the monoidal category $\M$ comes equipped with a family of arrows $\st_{M,X}:M \otimes FX \to F(M \otimes X)$ ({\em strength}) natural both in $M$ and in $X$, that pushes computational effects (of type $M$) inside the functor, seen as a datatype constructor. 
The dual concept -- though far less prominent -- is that of a {\em costrong} functor. It possesses a natural transformation, called {\em costrength} and  pointing in the opposite direction $\cst_{M,X}:F(M \otimes X)\to M \otimes FX$ that allows extraction of effects of type $M$, thus providing direct access to them~\cite{BluteCockettSeely1997}. 

\medskip

Expressing the natural transformations $\st_{M,-}$ or $\cst_{M,-}$ as $ \st_{M,-}:T_M  F \to F  T_M$, respectively $ \cst_{M,-}:F  T_M \to T_M  F$, 
where $T_M$ is the {\tt Writer}-like functor mentioned earlier, 
allows us to notice that a (co)strength of a (co)strong functor is precisely the {\em graded} analogue of the usual {\em distributive law} between a monad and a functor (or viceversa). 
In the denotational semantics of programming languages, distributive laws relating a monad and an endofunctor are the same as liftings to the Eilenberg-Moore category of algebras for the monad, respectively to the Kleisli category of the monad, depending on the orientation of the distributive law. 
These liftings provide modular ways to reason about composed effects: Eilenberg-Moore liftings emphasise algebraic models (for example, by resolving effects into structured algebras), while Kleisli liftings feature computational arrows (such as sequencing of effects in programming). 
A particular successful use of the latter was the monadic fold. The key ideas were to view a (recursive) datatype as the initial algebra of a functor $F$, and to lift it to the Kleisli category of a monad $T$ using a distributive law $FT \to TF$, obtaining thus (monadic) catamorphisms that act on $T$-valued arrows~\cite{Fokkinga1994}. 
Another important application of distributive laws of type $FT \to TF$ arose from coalgebraic trace semantics~\cite{HasuoJacobsSokolovaCMCS2006}. 
Transition systems are usually modelled as $F$-coalgebras, and their traces (sequences of observable actions or behaviours) are obtained by mapping into the final coalgebra\footnote{Final coalgebra of the lifted functor.} in the Kleisli category of a suitable monad $T$. 
This monad naturally encodes 'branching' effects (like the powerset monad for nondeterminism, or the distribution monad for probability), enabling a uniform definition of traces for various transition systems.

%======================================%

\paragraph{Costrength: overlooked despite duality.}
The asymmetric study of strong and costrong functors reflects a computational reality: every functor on the cartesian category of sets and functions  admits a (unique) strength, which naturally models the forward flow of data through processing pipelines, a fundamental operation in programming.
In other words, strength merges data into a computational process from an outside source — the same way lax monoidal functors combine two in-parallel functor applications, or monad multiplication merges nested data — thereby providing more possibilities for processing data.
By contrast, costrength extracts data from a functor-abstracted process, allowing observation of its internals — such as logging events — or recovery of intermediate results for downstream processing. 
The resulting duality slogan is: strength pushes data {\em into} computation, while costrength pulls data {\em out of} it. 
Despite this duality, the relative neglect of costrength has a structural explanation. 
Strong functors are intimately connected to enrichment: a functor on a closed monoidal category is strong if and only if it is enriched over that category, and in functional programming this corresponds to the familiar ability to lift a function $X \to Y$ to $FX \to FY$ using the internal hom. 
The dual notions of cohom, and costrong functors -- that is, functors with the ability of extracting a (co)function out of functor application -- rarely appear in functional programming, and so costrength has lacked an equally natural computational counterpart. 
Moreover, in the cartesian case, we shall see that costrong functors coincide precisely with copointed ones: functors $F$ equipped with a natural transformation $FX \rightarrow X$. 
This might suggest that costrength is a weak or uninteresting structure, but this interpretation is mistaken. 
Copointed functors are exactly those that allow observation and extraction of data out of a functor-abstracted computation — loggers, readers, and observable transition systems are paradigmatic examples. 
It is only that, in the Cartesian setting, this structure is so common that it is not explicitly named.
Currently, however, more and more monoidal structures are employed in the semantics of programming languages, and it is here that {\em hides the strength} of costrong functors.

%======================================%

\paragraph{Costrength nowadays.}
The current moment is opportune for several reasons.
The graded (co)monad semantics of algebraic effects and (co)effect handlers has matured to the point where the fine structure of distributive laws is practically relevant, not merely abstractly interesting. 
Concurrently, the theory of optics (lenses, prisms, and their generalisations) has drawn attention to (co)strength-like insertion/extraction maps as categorical transformers of bidirectional data access. 

%======================================%

\paragraph{Costrength perspective.}

There are two main aspects to this. 
On the theoretical side, the graded generalisation reveals that strength and costrength are not merely dual curiosities but organise into a coherent 2-categorical structure governing how effects interact with datatypes — a structure that the ungraded theory obscures. 
On the practical side, costrong functors provide the right abstraction for effects that expose their internal state: loggers, profilers, observable transition systems, and read-only environments are all, in categorical terms, costrong endofunctors, and the theory developed here gives a uniform framework for reasoning about their composition and lifting.

%======================================%

\paragraph{Outline.}
In this paper, we pursue two complementary perspectives on costrong functors that emerge from the preceding discussion: %
first, graded distributive laws of endofunctors over graded monads and their correspondence with liftings to Kleisli categories; %
second, the dualisation of the well-known theory of strong functors familiar from functional programming. 
These two viewpoints are, in essence, two sides of the same coin--or, to borrow a well-known metaphor, two sides of the same elephant--and we believe that much more remains to be said on the subject.

\medskip

The paper is structured as follows: %
After a very brief review of monoidal categories, in Section~\ref{sec:gr-mnd} we discuss graded monads, (graded) distributive laws of endofunctors over graded monads, and the one-to-one correspondence with liftings to the associated Kleisli category of the graded monad, as introduced in~\cite{FujiiKatsumataMelies-FoSSaCS2016}. 
In the subsequent Section~\ref{sec:M-acts} we adopt a different terminology: 
Rather than speaking of graded monads and functors endowed with (graded) distributive laws, we speak of (lax) actions of monoidal categories and costrong functors as morphisms between actions of monoidal categories. 
Subsection~\ref{sec:costrong=copointed} is about costrong endofunctors on cartesian categories acting upon themselves. We show that costrengths are in one-to-one correspondence with copoints, that is, natural transformations to the identity functor. 
In particular, all comonads and all copointed endofunctors on cartesian categories are costrong, possibly in more than one way. 
Next, we explore categorical aspects of costrong (endo)functors in Subsection~\ref{sec:acts = algs}. 
Readers primarily interested in applications may wish to proceed directly to Section~\ref{sec:applications}, of which Subsection~\ref{sec:optics} is the authors' initial motivation for this paper and it concerns morphisms of (mixed) optics~\cite{CEGLMPR2020}. 
Originating in functional programming, optics have emerged as a powerful formalism for bidirectional transformations, capturing in a compositional manner how data is be accessed, modified, and propagated through complex structures.
Transforming optics along a pair of functors, one strong and the other costrong, preserves the underlying bidirectional behaviour while allowing the surrounding computational context to vary. 
In particular, it provides a conceptual reason for studying the notion of costrength. 
The second application (Subsection~\ref{sec:costrong-stream}) explores the interaction between costrong functors and stream-based computations.
A $(\set,\times,1)$-costrong functor $F$ yields a distributive law over the graded {\tt Writer} monad $T_M(X)=M\times X$, allowing $F$ to lift stream coalgebras while preserving their observable output behaviour: 
applying $F$ wraps the state space in a context, but the stream of outputs remains accessible via the costrength and is interpreted through the final stream coalgebra $M^\omega$.
Moreover, the induced $F$-algebra structure 
on the $M^\omega$ provides the basis for a coinduction up-to principle, modular in the parameter $M$: 
every $M\times F(-)$-coalgebra admits a unique behavioural map into $M^\omega$, enabling coinductive definitions and proofs modulo the context described by the functor $F$.

\medskip

The present paper is a revised and significantly extended version of~\cite{BalanPantelimon2025}. In particular, Section~\ref{sec:gr-mnd}, devoted to graded monads and distributive laws, is entirely new and did not appear in the version presented at FROM~2025.

%======================================%

\section{Monoidal categories}
\label{sec:mon-cats-actions}

We shall assume some basic knowledge of category theory, otherwise we refer the reader to~\cite{MacLane1998}. Throughout the paper, $(\M,\otimes, I)$ will denote a monoidal category. 
For simplicity, we shall not label the associativity and unit constraints $(M\otimes N) \otimes P \cong M \otimes (N \otimes P)$ and $I \otimes M \cong M \cong M \otimes I$ or even omit them when the context is clear, writing as if the tensor product were strict. Monoidal functors whose structural morphisms are isomorphisms will be called pseudo-monoidal, rather than strong monoidal as usual, to distinguish them from the strong functors. 

\medskip

To avoid cluttering the notation, functor composition will be simply written by juxtaposition, using the $\circ$ symbol only when necessary. Likewise, we shall omit parentheses in functor application, denoting $FX$ instead of $F(X)$, except where required for clarity.

\begin{example}
	\begin{enumerate}
	
		\item Given a category $\A$, the category $[\A,\A]$ of endofunctors on $\A$ is strict monoidal with functor composition as tensor product, and unit the identity functor.
		\footnote{We shall ignore size issues that might prevent $[\A,\A]$ from being a legitimate category, restricting to accessible endofunctors if necessary.}

		\item Any cartesian category is symmetric monoidal with binary product $X \times Y$ as tensor, and unit the terminal object $\one$. 
        Dually, any cocartesian category is monoidal with respect to the binary coproduct $X+Y$ and initial object $\mathbb 0$.

		\item Any monoid  $(\mon,*,e)$ is a discrete monoidal category: the elements of $\mon$ are the objects of this category, there are no morphisms except the identities, with monoidal structure given by the binary operation $*$ of the monoid. The unit $e$ is the  monoidal unit. 
		
		If additionally the monoid is ordered (in particular, the multiplication is monotone in both arguments), then the monoidal category described above is no longer discrete, the morphisms being provided by the order relation. %

\end{enumerate}
\end{example}

%======================================%

\section{Graded Monads and Distributive Laws}\label{sec:gr-mnd}

It goes back to Moggi's seminal work that computational effects, like nondeterminism, storage or exceptions can be uniformly captured using monads~\cite{Moggi1991}. Intuitively, a monad $T$ associates to each type $X$ a type $TX$ of computations of type $X$; a function with side effects which takes inputs of type $X$ and returns values of type $Y$ is just a function of type $X\to TY$. 
It became clear in the last years that the monad  approach can be refined to track quantitative information about the effects performed by a computation (see e.g.~\cite{Katsumata2014,WadlerThiemann2003}). Mathematically, this is achieved by annotating ({\em grading}) the monadic types by effects $M$ in the form $T_M X$, and a monoidal structure on effects is necessary to ensure that these can be composed. For example, there is a {\em graded state} monad, where the grade tracks how many times a reference is used. 
There is also a {\em graded} version of the {\em list} monad, by considering lists of length less than $n$, for each natural number $n$.

\begin{definition}
Let $(\M,\otimes,I)$ be a monoidal category and $\A$ be a category. 
An $\M$-graded monad on $\A$~\cite{FujiiKatsumataMelies-FoSSaCS2016}, also called {\em parametrised monad} in~\cite{Katsumata2014,Mellies2012}, is a lax monoidal functor 
\[
T: (\M,\otimes, I) \to ([\A,\A],\circ,\id_\A) 
\]
Explicitly, an $\M$-graded monad  consists of the following:
\begin{itemize}
\item For each object $M$ in $\M$, an endofunctor $T_M:\A \to \A$, and for each arrow $f:M\to N$ in $\M$, a natural transformation $T_f: T_M \to T_N$, such that the assignment $M \longmapsto T_M$ is functorial;
\item A natural transformation $\eta:\id_\A \to T_I$, called the {\em unit} of the graded monad;
\item For each pair of objects $M,N$ of $\M$, a natural transformation $\mu_{M,N}:T_M  T_N \to T_{M\otimes N}$, called the {\em multiplication} of the graded monad, which is moreover natural in $M$ and $N$;
\end{itemize}
such that the diagrams below commute for every objects $M,N,P$ of $\M$:

\begin{equation}
\begin{gathered}
\xymatrix{T_M  T_N  T_P \ar[r]^{\mu_{T_P}} \ar[d]_{T_M \mu} & T_{M \otimes N} T_P \ar[d]^\mu 
\\
T_M T_{N \otimes P}  \ar[r]^\mu & T_{M \otimes N\otimes P}
} 
\\
\xymatrix{\id_\A T_M \ar[r]^-{\eta_{T_M}} \ar@{=}[dr] & T_I  T_M \ar[d]^\mu
\\
& T_M }
\quad \xymatrix{T_M \id_\A \ar[r]^{T_M \eta} \ar@{=}[dr] & T_M  T_I \ar[d]^\mu
\\
& T_M }
\end{gathered}
\label{eq:graded-mnd}
\end{equation}
\end{definition}
If $\eta$ and $\mu$ are isomorphisms, we recover the notion of an action of the monoidal category $\M$, also known as an $\M$-actegory~\cite{CapucciGavranovic2022,McCrudden2000,Pareigis1977}, that we shall consider in Section~\ref{sec:M-acts}.
If moreover $\eta$ and $\mu$ are identities, we shall call the action (or the graded monad) {\em strict.}

\medskip

The name {\em graded monad} is justified by the fact that if $\M$ is the category $\one$ with one object and one arrow, then an $\M$-graded monad $T$ on a category $\A$ is just a monad on $\A$~\cite{Benabou1967}. 

\begin{remark} 
In fact, taking the grade to be the unit $I$ of the monoidal category $\M$, the $I$-th component of a graded monad $T:\M \to [\A,\A]$ is always a monad on $\A$.  
More generally, $T_M$ is a monad whenever $M$ is a monoid in $\M$.  
\end{remark}

\begin{example}\label{ex:gr-mnds}
\begin{enumerate}
\item (Graded Maybe~\cite{solberg1994})\label{ex:gr-maybe}
The {\tt Maybe} monad is often used in functional programming to encapsulate computations that might fail. 
It allows a graded version, built as follows: 
Let 
\[
\mon = \{\mathtt{f}(ailure), \mathtt{s}(uccess),\mathtt{m}(aybe)\} 
\] 
be a three-element set with partial order \( \mathtt{f} \leq \mathtt{m} \), \( \mathtt{s} \leq \mathtt{m} \), and binary operation 
\[
\begin{array}{c|ccc}
\otimes & \mathtt f & \mathtt s & \mathtt m
\\
\hline
\mathtt f & \mathtt f & \mathtt f & \mathtt f
\\
\mathtt s & \mathtt f & \mathtt s & \mathtt m
\\
\mathtt m & \mathtt f & \mathtt m & \mathtt m
\end{array}
\]
It is easy to check that $((\mon,\le),\otimes,\mathtt s)$ is an  ordered monoid. 
Then a graded monad ${\tt GMaybe}: \mon \to [\set,\set]$ can be defined as follows: ${\tt GMaybe}_{\mathtt f} X = \one$, ${\tt GMaybe}_{\mathtt s} X = X$ and ${\tt GMaybe}_{\mathtt m} X = \mathtt{Maybe} \, X = \one + X$, where $\one=\{*\}$ denotes a singleton set. The inequalities $\mathtt{f} \leq \mathtt{m}$ and $\mathtt{s} \leq \mathtt{m}$ are mapped by $\mathtt{GMaybe}$ to the canonical injections $\one \to \one +X$, respectively $X \to \one +X$.

\item (Graded Exceptions (I)) The  most common example of a monadic computation involves exception monads. In a programming language we would typically define a type for the coproduct $T_E X = E + X$, where $E$ represents possible exception and $X$ is the type of the successful result. A monadic computation built on this type would execute a program using $\mathtt{map}$ and $\mathtt{bind}$ (or $\mathtt{join}$) and stop when an exception is encountered. 
A difficulty arises, however, when we need to handle exceptions belonging to different types. A standard monad is insufficient in this case, because its exception type is fixed and cannot naturally combine or unify multiple distinct exception types. The graded exception $(\set,+,\emptyset)$-monad on $\set$, $T_E X = E + X$, solves this issue by taking into account all possible exception types and combining them to a disjoint union. 

\item (Graded Exceptions (II)~\cite{Katsumata2014}) In many situations, it is the case that the (large) class of possible exceptions of the previous example can be limited to a known (non-empty) set $E$. Then  the previous example can be refined as follows: Consider the ordered monoid $((\mathcal P(E), \subseteq), \cup,\emptyset)$ of subsets of $E$, with union as binary operation. This induces a graded monad on $\set$ by $T_e X = e+X$, for a subset $e\subseteq E$ of exceptions. The unit $\eta_X: X \cong \emptyset+X$ embeds pure values as successful computations. The multiplication $\mu_{e,e'}:(T_e T_{e'} )X = e+e'+X \to T_{e\cup e'}X = (e\cup e')+X$ propagates (sets of) exceptions by taking their union, thus tracking which exceptions may be raised, but not how many times these occur. In the previous example, the monoidal product was given by disjoint union. Therefore each exception carried its own log, recording the stage of the computation at which was raised.

\item (Graded Writer (I)) Consider $(\set, \times, \one)$ with its cartesian monoidal structure and the graded {\tt Writer} monad $T_MX = M \times X$ described in the Introduction (for an arbitrary monoidal category $(\M,\otimes, I)$); its multiplication and unit are induced by the associativity and unit constraints of the cartesian product. Consequently, the outputs $M_1, \ldots, M_n$ (which can represent logs, traces, warnings, etc.) accumulate to tuples taking values in $M_1 \times \ldots \times M_n$.

\item (Graded Writer (II)~\cite{Katsumata2014}) As for graded exceptions earlier, the graded {\tt Writer} monad admits a variant that is both more restricted and more refined, as follows: Let $\Sigma$ be a set (an alphabet), and consider the ordered monoid of languages  $((\mathcal P(\Sigma^*), \subseteq), \cdot,\{\epsilon\})$, with binary operation induced by string concatenation:
\[
L \cdot L' = \{\omega \omega'\mid \omega \in L, \omega' \in L'\}
\]
The unit is the singleton language consisting of the empty string $\epsilon$. A $\mathcal P(\Sigma^*)$-graded monad $T$ on $\set$ is obtained as follows: for each language $L\subseteq \Sigma^*$, put  $T_L X = L \times X$. The unit and multiplication are given by $\eta_X: X \cong T_{\{\epsilon \}}X = \{\epsilon\} \times X\ , \  \eta_X(x)=(\{\epsilon\},x)$, respectively
\begin{align*}
& \mu_{L,L';X}: T_L T_{L'}X =L\times (L'\times X) \to T_{L \cdot L'}X = (L\cdot L')\times X 
\\
& \mu_{L,L';X}(\omega, (\omega',x)) = (\omega \omega',x)
\end{align*}

\item (Graded List~\cite{McDermottUustalu2022a}) The $((\mathbb N,\le)\cdot, 1)$-graded list monad on $\set$ maps each grade (natural number) $n$ and each set $X$ to the set $\mathsf{List}_{\le n}X$ of lists over $X$ of length at most $n$; for $m\le n$, the natural transformation $\mathsf{List}_{\le m}X\to \mathsf{List}_{\le n}X$ is the inclusion. The unit of the graded monad is again the inclusion $X\to \mathsf{List}_{\le 1}X$, while multiplication is list concatenation.

\item \label{ex-rel-presh}
Let $((\V,\le),\otimes,e)$ be a commutative quantale (a commutative monoid in the category of suplattices). Following Lawvere~\cite{Lawvere1973}, we shall perceive (small) $\V$-enriched categories as generalised metric spaces with $\V$-valued distances, and $\V$-functors as non-expanding maps. Such a generalised metric space will be written $(X,d_X)$ where the $\V$-valued distance $d_X$ is transitive and reflexive, but not necessarily symmetric; additionally, points at zero-distance (where  zero stands for the bottom element of the quantale) are not necessarily equal. Let $\mathbf{Met}(\V)$ denote the category of (small) $\V$-enriched categories and $\V$-functors. 
A $\V\op$-graded monad on $\mathbf{Met}(\V)$ can be obtained as follows: let $v\in \V$ and $(X,d_X)$ a $\V$-metric space and put $T_r X = \{(x_1,x_2)\in X \times X\mid d_X(x_1,x_2)\ge r\}$, with infimum $\V$-metric\footnote{The infimum being computed in $\V$, not in $\V\op$.}
\[
d_{T_r X}((x_1,x_2),(y_1,y_2))= d_X(x_1,y_1)\wedge d_X(x_2,y_2)
\]
For $r \le s$, $T_s X$ is a $\V$-metric subspace of $T_r X$ (hence the $\mathsf{op}$ in $\V$). The unit of this graded monad is the diagonal $\eta_X :X \to T_e X$, $\eta_X(x)=(x,x)$, while the multiplication is given by 
\[
\mu_{r,s;X}:(T_r T_s)X \to T_{r \otimes s}X \ , \ \mu_{r,s}((x_1,x_2),(y_1,y_2)) = (x_1,y_2)
\]
It is not difficult to see that both $\eta_X$ and $\mu_{r,s;X}$ are $\V$-functors and natural in $X$, respectively in $r,s\in \V$ and $X$, and that they indeed induce a $\V\op$-graded monad structure on $T$.

Observe that each $T_rX$ is a binary relation on $X$, and that $T_s X\subseteq T_r X$ for $r\le s$ accounts for dependencies between these binary relations (like causal precedence implies temporal precedence in models of concurrency theory~\cite{%
%Casley-etal,
Gaifman1989}).

\item In~\cite{KellisonHsu2024}, a related graded monad, called the {\em graded neighbourhood} monad, was developed for a functional programming language {\tt Numerical Fuzz} whose type system can track quantitative bounds on roundoff error. 
The quantale $\V$ is now $(([0, \infty),\ge),+,0)$ (notice the greater-or-equal order). For a (generalised) metric space $(X,d)$, $T_rX$ has the same underlying set as in the previous example, but with different metric 
\[
d_{T_r X}((x_1,x_2),(y_1,y_2))= d_X(x_1,y_{1})
\]
Intuitively, each program is seen as a non-expansive map (in a metric semantics), and the monadic type $T_rX$ consists of pairs of values, where the first value is the (ideal) output of the program, while the second value is the actual output, under some finite approximation. The chosen metric on $T_rX$ carries then a natural interpretation: 
the distance depends only on how close the ideal results are, and completely ignores differences in their finite approximations, hence the denotational model is robust to approximation errors.
\end{enumerate}

\end{example}

Further examples can be found in the references cited above. We shall give more examples in Section~\ref{sec:M-acts}, dedicated to  those graded monads whose unit and multiplication are isomorphisms (that is, actions of monoidal categories in the usual sense).

\begin{remark}[Kleisli presentation of graded monads] Graded monads admit a Kleisli-like presentation~\cite{Katsumata2014} that we shall  recall below just for completeness: to give an $\M$-graded monad on a category $\A$ is the same as the following data: 
\begin{itemize}
\item For each $M$ of $\M$, a correspondence $X\longmapsto T_M X$ on objects of $\A$;
\item A unit morphism $\eta_X:X \to T_I X$ for every object $X$ of $\A$;
\item A Kleisli extension operator, mapping every morphism $f:X \to T_N Y$ of $\A$ to a family of morphisms $(f_M^\dagger: T_M X \to T_{M \otimes N} Y)_{M}$, natural in $M$;
\end{itemize}
subject to the following requirements:
\[
f_I^\dagger \circ \eta_X = f\ ; \ \id_{T_M X} = (\eta_X)_M^\dagger \ ; \ (g_N^\dagger \circ f)_M^\dagger = g_{M \otimes N}^\dagger \circ f_M^\dagger 
\]
where $f: X \to T_N Y$, $g: Y \to T_{M'} Z$.

\end{remark}

%======================================%

\subsection{Graded distributive laws of endofunctors over graded monads}\label{sec:dl}

A distributive law of an endofunctor $F:\A \to \A$ over a monad $T$ on $\A$ is a natural transformation $\lambda: FT \to TF$ compatible with the unit and the multiplication of the monad. 
There is a one-to-one correspondence between such distributive laws and liftings%
\footnote{Actually, these are {\em extensions}, not  {\em liftings}, although the latter term is already standard in the literature. See also the discussion in~\cite{ManesMulry2007}.} %
of $F$ to the Kleisli category of $T$~\cite{LPW2000}.

\medskip

In this section we look at distributive laws of endofunctors over graded monads. Later, in Section~\ref{sec:kl-lift} we shall see that  our definition of such distributive law is justified by the correspondence with Kleisli lifts (the Kleisli construction for graded monads being recalled in Section~\ref{sec:gr-kl}).

\medskip

More than $50$ years ago, B\'enabou characterised monads as lax functors from $\Sigma \one$ to $\cat$~\cite{Benabou1967}, where $\Sigma \one$ is the {\em suspension} of the category $\one$ (the one-object bicategory whose hom-category is $\one$), and $\cat$ is the $2$-category of categories, functors, and natural transformations. 
From this perspective, a distributive law $FT \to TF$ of an endofunctor $F$ over a monad $T$ is precisely a {\em colax} natural transformation $T \to T:\Sigma\one \to \cat$ between lax functors.\footnote{In~\cite{Street1972}, these were called {\em right lax transformations}.} 
All these generalises straightforward upon replacing $\one$ by an arbitrary monoidal category $\M$: there is an  induced one-object bicategory $\Sigma \M$, and a graded monad $T:\M \to [\A,\A]$ can be equivalently described as a lax functor $\Sigma \M \to \cat$, mapping the unique object of $\Sigma \M$ to $\A$, with action on the hom-category given by $T$. A colax natural transformation $T \to T:\Sigma\M \to \cat$ between lax functors is what we shall call a {\em graded} distributive law. Spelling this out yields the following definition:

\begin{definition}\label{defn:dl}
Let $T$ be an $\M$-graded monad on a category $\A$, and let $F$ be a functor on $\A$. 
A {\em graded distributive law} of $F$ over $T$ consists of a family of natural transformations $\lambda_M: F T_M \to T_M F$, natural also in the objects of $\M$, such that the following diagrams commute: 
\begin{equation}\label{eq:dl}
\begin{gathered}
\xymatrix@C=40pt{F T_M T_{M'} \ar[r]^{\lambda_{M}T_{M'} }\ar[d]_{F \mu_{M,M'} } 
&
T_M F T_{M'} \ar[r]^{T_M \lambda_{M'}} 
& T_M T_{M'}F  \ar[d]^{\mu_{M,M'}F}
\\
FT_{M \otimes M'} \ar[rr]^{\lambda_{M \times M'}} 
&&
T_{M \otimes M'}F
}
\\
\xymatrix{
& F \ar[dl]_{\eta_F} \ar[dr]^{F \eta} & 
\\
F T_I  \ar[rr]^{\lambda_I} && T_I F
}
\end{gathered}
\end{equation}
\end{definition}

In Section~\ref{sec:cst-functors} we shall focus on $\M $-actegories, rather than arbitrary graded monads. We shall then refer to functors endowed with such a distributive law as {\em costrong}, emphasising the duality with the strong functors used in functional programming.

\begin{remark}\label{rem:gen-dl}
\begin{enumerate}
\item The definition above allows for more generality, in the sense that one can consider not one but two $\M$-graded monads, seen as lax functors $T,S:\Sigma \M \to \cat$, and a lax natural transformation between these, which amounts to a pair $(F, (\lambda_M: FT_M \to S_M F)_M)$, where $F$ is a functor between the categories upon which the graded monads act, satisfying similar diagrams to those of Definition~\ref{defn:dl}. When $\M=\one$, such pairs were called {\em opmonad morphisms} in~\cite{Street1972}. 

One step further, the monoidal category itself can be varied. Then a distributive law between an $\M$-graded monad $T$ on $\A$ and an $\N$-graded monad $S$ on $\B$ is a triple $(\Phi, F,\lambda)$ where $\Phi:\M \to \N$ is a pseudo-monoidal functor between the categories of grades, $F:\A \to \B$ is a functor, and $(\lambda_{M}:FT_M \to S_{\Phi(M)} F_M)_M$ is a family of natural transformations required to verify the analogue of~\eqref{eq:dl}.
Even more generally, the two monoidal categories may be linked not by a pseudo-monoidal functor, but by a {\em span} of pseudo-monoidal functors $\Phi:\kat K\to \M$, $\Psi:\kat K \to \N$, in which case the graded distributive law takes the form $(\lambda_K: FT_{\Phi K} \to S_{\Psi K} F)_K$.

\item In~\cite{GKOBU2016}, distributive laws of a graded comonad over a graded monad are considered. 
Both the graded monad and comonad act on the same category $\A$, but the ordered monoids inducing the gradings are not the same. This issue is solved 
by a pair of monoid actions subject to a series of axioms ensuring, among all, that a(n ordered) monoid structure can be obtained on the cartesian product of these two monoids (hence a span of projections as described above). 

These distributive laws are at the same time more general and more specific than ours: more general in that they allow arbitrary categories of gradings related by a matched pair of actions, and more specific in that they involve distributivity of a (graded) comonad rather than a plain endofunctor over a graded monad (therefore compatibility is required between the distributive law and the (co)multiplication/(co)unit of the respective graded monad and graded comonad). 
\end{enumerate}
\end{remark}

An immediate consequence of Definition~\ref{defn:dl} is that graded distributive laws {\em compose}: 

\begin{proposition}\label{prop:compos-dl}
If $F,G:\A\to \A$ are functors which distribute over the $\M$-graded monad $T$ via  
$\lambda^F_M: FT_M \to T_M F$, respectively $\lambda_M^G: GT_M \to T_M G$, then the composite functor $FG$ distributes again over $T$ by 
\begin{equation}\label{eq:compositeDL}
\lambda^{FG}_M:\xymatrix{
	F G T_M 
	\ar[r]^{F\lambda_M^G}
	&
	F T_M G 
	\ar[r]^{\lambda_M^F G}&
	T_M F G}
\end{equation}
\end{proposition}

\begin{proof}
The result follows from the commutative diagram below:
\[
\xymatrix@C=40pt{
F G T_M T_{M'} 
\ar[r]^{F\lambda^G_{M}T_{M'} }
\ar[dd]_{FG \mu_{M,M'} } 
&
FT_M G T_{M'} 
\ar[r]^{\lambda^F_{M}GT_{M'}} 
& 
T_M F G T_{M'}  
\ar[r]^-{T_M F\lambda^G_{M'}}
&
T_M F T_{M'} G
\ar[r]^-{T_M \lambda^F_{M'}G}
&
T_M T_{M'} F G
\ar[dd]^{\mu_{M,M'}F G}
\\
&
&
F T_M T_{M'}G
\ar@{<-}[ul]^-{FT_M \lambda^G_{M'}}
\ar[ur]_-{\lambda^F_M T_{M'} G}
\ar[d]^{F\mu_{M,M'} G}
&
&
\\
F G T_{M \otimes M'} 
\ar[rr]^{F\lambda_{M\otimes M'}^G} 
&
&
F T_{M \otimes M'} G
\ar[rr]^-{\lambda_{M\otimes M'}^F G}
&
&
T_{M \otimes M'} F G
}
\]
and a similar one involving the unit of the graded monad. In the diagram above, the left and right pentagons commute by definition of the distributive laws of $G$, respectively $F$, while the top-middle square commutes by naturality. 
\end{proof}

\begin{remark}\label{rem:compos-dl}
Observe that, unlike distributive laws between (graded) monads (or between a monad and a comonad), no additional conditions are required for Equation~\eqref{eq:compositeDL} to define a valid graded distributive law. 
In the case of distributive laws between monads, the issue is that one wishes their composite to carry a monad structure as well, which occurs {\em only if} certain additional equations are satisfied. 
These equations ensure that the distributive law is compatible with the unit and the multiplicationof the monads.
\end{remark}

\begin{example}\label{ex:gr-dl}
\begin{enumerate}

\item \label{ex:gr-dl-maybe} Let $A$ be a fixed set of values and take $F:\set \to \set$ the functor $FX = \one + A \times X$, with $\one =\{0\}$. Then a distributive law of $F$ over the $\mon$-graded {\tt GMaybe} monad from Example~\ref{ex:gr-mnds}.\ref{ex:gr-maybe} can be defined as follows: $\lambda_{\mathtt f}$ is the unique function to the terminal set $\one$, $\lambda_{\mathtt s}$ is the identity on $F$, while $\lambda_{\mathtt m}: F\, \mathtt{GMaybe}_{\mathtt m} = \one + A \times (\one +-)  \to \mathtt{GMaybe}_{\mathtt m} \, F = \one + (\one+A \times -)$ maps $0$ (failure) to itself, pairs $(a,*)$ to $*$ (the pure value of type $A$ is dropped when there is no continuation), respectively pairs $(a,x)$ to themselves (value and state are preserved).

\item 
It is well-known that polynomial functors on $\set$ distribute over commutative  monads~\cite{HasuoJacobsSokolovaCMCS2006,ManesMulry2007}, the distributive law being recursively defined on the structure of the polynomials.\footnote{And sometimes called the {\em canonical} distributive law.} 
Now, if the monad is replaced by a graded monad $T:\M\to [\set,\set]$ is instead considered, we can see where the construction in~\cite{HasuoJacobsSokolovaCMCS2006} of a distributive laws generally fails. 
First, the component functors $T_M$ of the graded monad are not-necessarily pointed, hence if $F$ is a constant functor, there is no natural transformations $FT_M \to T_M F$. 
Second, there is no recipe on how to distribute the product $F\times G$ over each $T_M$, given distributive laws of $F$ and $G$. 
In~\cite{HasuoJacobsSokolovaCMCS2006}, this last issue is solved by recurring to {\em commutative} monads. 
In fact, a closer look at the recipe in~{\em op.cit.} shows that only a {\em lax monoidal} structure on the monad is necessary. 
To fix both issues, we can, for instance, consider graded monads whose components are {\em applicative} functors on $\set$. The previous example is precisely an instance of these {\em canonical} distributive laws.

\item Let $\V$ be a quantale and $F:\set \to \set$ a functor. Denote by $F_\V$ the extension of $F$ to $\mathbf{Met}(\V)$ which coincides with $F$ on discrete $\V$-metric spaces~\cite{BKV2019}. Assume that $F$ preserves weak pullbacks. Then $F_\V$ admits the following description using Barr's relation lifting~\cite{Barr1970}: for each $\V$-metric space $(X,d_X)$, $F_\V (X,d_X)$ is $FX$, with $\V$-distances~\cite{BKV2019} 
\begin{equation}\label{eq:V-metric}
%\begin{aligned}
%&
d_{F_\V (X,d_X)}(u_1,u_2)
%&&
= 
%&&
\bigvee_r \{r \mid \exists\, w \in F(T_r X) 
%\\
%&&&&&
%\mbox{ such that }  
\, . \, F\pi_{1}(w)=u_{1}
\, , \,
%\mbox{ and } 
F\pi_2 (w)=u_2 \}
%\end{aligned}
\end{equation}
%$F_\V$ acts as $F$ also on arrows. 
For example, if $F$ is the powerset functor, then the formula above shows that $F_\V$ carries the Hausdorff-Pompeiu metric on subsets. 
The $\V$-metric of~\eqref{eq:V-metric} naturally induces a distributive law of $F_\V$ over the graded monad of Example~\ref{ex:gr-mnds}.\ref{ex-rel-presh} by $\lambda_{r} : F_\V T_r  \to T_r F_\V $ given by
\[
w\in F_\V T_r (X,d_X) \mapsto \lambda_r(w)=(F\pi_1(w), F\pi_2(w))\in T_r F_\V  (X,d_X)
\]
\end{enumerate}
\end{example}

More examples of distributive laws will be provided in the remaining sections of the paper (e.g. Example~\ref{ex:cst-functors}, Section~\ref{sec:costrong=copointed}).

%======================================%

%======================================%

\subsection{The Kleisli construction for a graded monad~\cite{FujiiKatsumataMelies-FoSSaCS2016}}\label{sec:gr-kl}

Given a monad $T$ on a category $\A$, recall that the Kleisli category $\A_T$ associated to $T$ has the same objects as the base category $\A$, but an arrow $X \nto Y$ in $\A_T$ is in fact an arrow $X \to TY$ in $\A$. 
There is an adjunction $L_T:\A\rightleftarrows \A_T:R_T$, where the left adjoint $L_T$ is the identity on objects and postcomposition with the unit of the monad on arrows, while the right adjoint applies $T$ on objects; on arrows, $R_T$ acts as $T$ followed by the multiplication of the monad. 
This adjunction induces the monad $T$, in the sense that $T = R_TL_T$, and it is initial among all adjunctions inducing the monad. 

\medskip

In order to motivate the Kleisli construction for graded monads, that we shall recall a few lines below, Fujii, Katsumata and Melli\`es introduced the notion of 
{\em resolution} of a graded monad~\cite{FujiiKatsumataMelies-FoSSaCS2016}, consisting of an adjunction $L:\A \rightleftarrows \B:R$ and of a {\em strict} graded monad $S:\M\to [\B,\B]$. %
These data are required to induce the original graded monad, in particular $T_M = R S_M L$ holds for all $M$. 
When $\M=\one$, a graded monad on a category $\A$ is just a monad on $\A$. Observe that there is a {\em unique strict} $\one$-action on $\A$, namely the identity, hence a resolution in this case is just an adjunction inducing the monad.   
It was shown in~\cite{FujiiKatsumataMelies-FoSSaCS2016} that a graded monad $T$ may admit more than one resolution, and that the Kleisli construction, whenever it exists,\footnote{The morphisms in this Kleisli category (see the next paragraphs) are obtained using a coend, which in some cases might not exist.} is the {\em initial resolution} of $T$. 

\medskip

The Kleisli category $\A_T$ for an $\M$-graded monad $T$ on a category $\A$ is obtained as follows~\cite{FujiiKatsumataMelies-FoSSaCS2016}:
The objects of $\A_T$ are pairs $(M,X)$, where $M$ is an object of $\M$ and $X$ is an object of $\A$. 
Intuitively, a pair $(M,X)$ represents ``values of type $X$ in a context/grade $M$'', like effect bound. 
The hom-sets of $\A_T$ are given by the coend formula
\begin{equation}\label{eq:kl-coend}
\A_T((M,X),(M',X')) = \int^{N \in \M} \A(X,T_N X')\times \M(M\otimes N,M')
\end{equation}
Explicitly, an arrow $(M,X)\to(M',X')$ is an equivalence class of triples $(N, X \overset{f}{\to} T_N X', M \otimes N\overset{u}{\to} M')$ modulo the equivalence relation generated by 
\[
%\begin{align*}
(N, X \overset{f}{\to} T_N X', M \otimes N\overset{\id \otimes v}{\to}M \otimes N'\overset{u}{\to} M') 
\\
\sim 
(N', X \overset{f}{\to} T_N X'\overset{T_r \id}{\to} T_{N'} X', M \otimes N'\overset{u}{\to}{M'})
%\end{align*}
\]
for every $v:N \to N'$ in $\M$. The object $N$ can be perceived as the internal (hidden) grade for the computation, $f:X \to T_N X'$ 
is a morphism with graded/effectful output, while $u:M \otimes N \to M'$ represents grade correction. 
Intuitively, in programming, one starts with a pure input $X$ (in context $M$), then performs a computation $f$ that uses the additional grade $N$ to produce $X'$ (wrapped in the $N$-effect). 
Finally, the total grade $M \otimes N$ is adjusted to match the output grade $M'$ via $u$ (e.g., $u$ could be an isomorphism for an exact match, or a morphism rellabeling or discarding resources). 
The coend ``integrates'' over all possible intermediate $N$, making thus morphisms flexible.

The identity morphism on an object $(M,X)$ is induced by  the triple $(I, X \overset{\eta_X}{\to} T_I X, M \otimes I \cong M)$, while composition of morphisms is defined via the universal property of coends, which described explicitly gives 
\begin{align*}
(N', X' \overset{f'}{\to} T_{N'} X'', M' \otimes N'\overset{u'}{\to} M'')\circ (N, X \overset{f}{\to} T_N X', M \otimes N\overset{u}{\to} M')
\\
=
(N \otimes N', 
X \overset{f}{\to} T_N X' \overset{T_{\id} f'}{\to} (T_N \circ T_{N'}) X'' \overset{\mu}{\to} T_{N \otimes N'} X'', 
\\
M \otimes (N\otimes N') \cong (M \otimes N) \otimes N' \overset{u\otimes \id }{\to} M'\otimes N' \overset{u'}{\to} M''
)
\end{align*}

There is a strict $\M$-graded monad $S$ on $\A_T$\footnote{Technically, if the tensor product on $\M$ is not strictly associative and unital, this is only a {\em pseudo}-action, but this mild inconvenience does not obstruct our results.} induced by the regular action of $\M$ on itself as $S_M (N ,X) = (M\otimes N,X)$ 
and an adjunction
\[
\xymatrix@C=40pt{\A \ar@<1ex>[r]^{L_T} \ar@{}[r]|{\perp} \ar@{<-}@<-1ex>[r]_{R_T}& \A_T }
\]
where the left adjoint $L_T$ maps an object $X$ to $(I,X)$, while $R_T(M,X)= T_M X$. These determine a resolution of the original graded monad by 
\[
(R_T  S_M L_T)X = T_{M\otimes I}X \cong T_M X
\]
and in~\cite{FujiiKatsumataMelies-FoSSaCS2016} it was shown to be initial resolution of the $\M$-graded monad $T$.

\begin{remark}
There are two potential issues regarding the Kleisli construction outlined above: 
Firstly, the existence of the coends used for the homsets. 
Secondly, the associativity of the composition and the behaviour with respect to the identity arrow, which hold only up to isomorphisms. 
We shall discuss the first one, but let us briefly note that the second issue can be treated in the simplest way by restricting to a {\em strict monoidal} category $\mathcal M$; for example, one may take $\mathcal M$ to be a partially ordered monoid.\footnote{This is the situation addressed in most of the existing literature on graded monads in computation.}
Returning to the first issue, observe that the coend of~\eqref{eq:kl-coend} certainly exists when $\M$ is small, but it does also in other familiar situation, namely when the monoidal category $\M$ is (right) closed: 
\[
\M(M \otimes N,P) \cong \M(N,[M,P])
\]
In this case, Yoneda lemma shows that Equation~\eqref{eq:kl-coend} becomes
\begin{align}
&
\A_T((M,X),(M',X'))
&&
=
&&
\int^{N \in \M} \A(X,T_N X')\times \M(M\otimes N,M')
\nonumber
\\
&&&
\cong 
&&
\int^{N \in \M} \A(X,T_N X')\times \M(N,[M,M'])
\nonumber
\\
&&&
\cong
&&
\A(X,T_{[M,M']} X')\label{eq:kl-coend-resid}
\end{align}
Therefore an arrow $(M,X)\to (M',X')$ in $\A_T$ is an arrow $X \to T_{[M,M']} X'$ in $\A$, with composition induced by the internal hom composition in $\M$: 
\[
[M,M'] \otimes [M',M''] \to [M,M'']
\]
Taking the mate of the identity arrow on $M$, namely $I \to [M,M]$ produces the identity arrow on $(M,X)$ in $\A_T$ as:
\[
\id_{(M,X)}: \one \overset{\id_X}{\to} \A(X,X) \overset{\eta_X}{\to} \A(X,T_I X) \overset{}{\to} \A(X,T_{[M,M]} X)
\] 
In the coend presentation~\eqref{eq:kl-coend} of hom-sets of $\A_T$, to give a morphism $(M,X) \to (M',X')$ assumes an internal  choice of an intermediate grade $N$, producing an $N$-effectful computation $X \to T_N X'$ while separately giving a grade transformer $M \otimes N \to M'$. When $\M$ is right closed, instead of choosing $N$ and them mapping $M \otimes N$ to $M'$, a direct effect, with grade $[M,M']$, is available.  
\end{remark}

\begin{example}\label{ex:dl}
\begin{enumerate}

\item\label{ex:dl-GMaybe} (Graded Maybe) The Kleisli category of the graded Maybe monad from Example~\ref{ex:gr-mnds}.\ref{ex:gr-maybe} has as objects pairs $(\mathtt{a},X)$ where $\mathtt a\in \mon$ indicates the effect of a (previous) computation, and $X$ is a set of pure values. Observe that the binary operation on $\mon$ is only partially residuated, that is, the internal hom in $(\mon, \otimes, \mathtt s)$ exists with two exceptions, namely $[\mathtt f, \mathtt s]$ and $[\mathtt m, \mathtt s]$, being given by 
\[
\begin{array}{|c|c|c|c|}\hline
[-,-] & \mathtt f & \mathtt s & \mathtt m
\\
\hline
\mathtt f & \mathtt m & \not\exists & \mathtt m 
\\
\hline
\mathtt s & \mathtt f & \mathtt s & \mathtt m  
\\
\hline
\mathtt m & \mathtt f & \not\exists & \mathtt m  
\\
\hline
\end{array}
\] 
Although the internal hom is not everywhere defined, we can still use Equation~\eqref{eq:kl-coend-resid} whenever $[-,-]$ exists (as it does not involve the coend parameter) and obtain that the hom-sets of the Kleisli category for the graded Maybe monad are as follows: 
\[
\begin{array}{|c||c|c|c|}
\hline
\set_{\tt GMaybe}((-,X),(-,Y)) & (\mathtt f,Y) & (\mathtt s, Y) & (\mathtt m,Y) 
\\
\hhline{|=|=|=|=}
(\mathtt f,X) & \set(X,\one + Y) &  & \set(X,\one+Y)
\\
\hline 
(\mathtt s,X) & \one & \set(X,Y) & \set(X,\one+Y)
\\
\hline
(\mathtt m,X) & \one &  & \set(X,\one+Y)
\\
\hline
\end{array}
\]
For the missing cases in the table above, we computed the coend~\eqref{eq:kl-coend} by the usual formula involving coproducts and coequalisers, and obtained that
\[
\set_{\tt GMaybe}(({\mathtt f},X),({\mathbf s},Y)) = \emptyset \ , \ \set_{\tt GMaybe}(({\mathtt m},X),({\mathtt s},Y)) = \emptyset
\]
The fact that there are no arrows in $\set_{\tt GMaybe}$ between $(\mathtt{f},X)$ and $(\mathtt{s},Y)$ supports the intuition 
that a computation intended to produce values of type $X$ but which had failed cannot subsequently yield a successful value of type $Y$.
Similarly, $\set_{\tt GMaybe}((\mathtt{m},X),(\mathtt{s},Y)) = \emptyset$ indicates that it is impossible to guarantee a successful computation of a pure value of type $Y$ when starting from a computation in $X$ that may fail.

\item Another example is obtained when taking $\M$ to be $\set$, with cartesian monoidal structure, and the graded {\tt Writer} monad induced by the regular action of $\set$ upon itself, i.e. $T_M X = M \times X$. 
In the Kleisli category of the (ungraded) {\tt Writer} monad, each computation produces a result (a pure value) and  a log-type message. Kleisli composition automatically accumulates the logs in the order they appeared, alongside the main computation result, without having the computation depending on previous logs (it is ``write-only'').
Turning to the graded version $T$ of the {\tt Writer} monad, observe that $\set$ is cartesian closed, hence the Kleisli category of $T$ has as objects pairs of sets $(M,X)$, with maps $X \to [M,M']\times X'$ as arrows $(M,X)\to (M',X')$. 
Here $[-,-]$ denotes the usual internal hom in $\set$ (that is, the {\em set} of functions $M \to M'$). 
Intuitively, a function $X \to [M,M']\times X'$ is a computation that takes a value of type $X$, produces a value of type $X'$, and provides a ``grade transformer'' $M \to M'$ depending only on the input value. 

\end{enumerate}
\end{example}

%======================================%

\subsection{Distributive laws and extension to Kleisli category}\label{sec:kl-lift}

Distributive laws $FT \to TF$ of an endofunctor $F$ over a monad $T$ on a category $\A$ are in one-to-one correspondence with liftings of $F$ to the Kleisli category $\A_T$ of the monad, that is, with endofunctors $\tilde F:\A_T \to \A_T$ satisfying $\tilde F L_T = L_T F$:
\[
\xymatrix{\A_T \ar[r]^{\tilde F} & \A_T 
\\
\A \ar[u]^{L_T} \ar[r]^F & \A \ar[u]_{L_T}
}
\]

We shall see how this correspondence extends from monads to graded monads, using the distributive laws introduced in Section~\ref{sec:dl}. 

First, consider a distributive law $\lambda_M: FT_M \to T_M F$ of a functor $F$ over an $\M$-graded monad $T$ on a category $\A$. Construct a functor $\tilde F$ on $\A_T$ as follows: on objects, put 
\begin{equation}\label{eq:kl-lift-obj}
\tilde F(M,X)=(M, FX)
\end{equation}
while on arrows use the distributive law: 
\begin{equation}\label{eq:kl-lift-arrows}
\begin{aligned}
& \A_T((M,X),(M',X'))&& = &&\int^{N \in \M} \A(X,T_N X')\times \M(M\otimes N,M')
\\
&&&\overset{F}{\to} &&\int^{N \in \M} \A(FX,FT_N X')\times \M(M\otimes N,M')
\\
&&&\overset{\lambda}{\to} &&\int^{N \in \M} \A(FX,T_N FX')\times \M(M\otimes N,M')
\\
&&&= &&\A_T((M,FX),(M',FX'))
\end{aligned}
\end{equation}
Observe that $\tilde F$ (strictly) distributes over the $\M$-graded monad $S$ on $\A_T$ by 
\[
\tilde F S_M (N,X) 
%= \tilde F (M \otimes N,X )  
= (M \otimes N, FX) 
%= S_M (N,FX) 
= S_M \tilde F(N,X)
\]
Also, $\tilde F L_T=L_T F$ holds, hence $F$ lifts to $\A_T$. 

\medskip

Unlike liftings of ordinary (ungraded) monads, graded monads require that the lifting of the functor also takes the grading into account.
We shall now see that this requirement is essential for establishing the converse, from liftings back to distributive laws.\footnote{In fact, unravelling the $2$-categorical constructions of~\cite{FujiiKatsumataMelies-FoSSaCS2016} used to justify the Kleisli (and Eilenberg-Moore) categories for a graded monad, we can see that a distributive law as above corresponds to a genuine opmonad morphism, but {\em over} the (pseudo)comonad $[\M,-]:\cat\to \cat$.}
Consider thus a lifting $\tilde F$ of $F$ to $\A_T$, together with a distributive law $(\tilde \lambda_M:\tilde F S_M \to S_M \tilde F)_M$. We use that $\A_T$ is a resolution of the graded monad in order to define $\lambda_M: FT_M \to T_M F$ as 
\[
FT_M = F R_T S_M L_T \to R_T \tilde F S_M L_T \overset{\tilde \lambda_M}{\to} R_T S_M \tilde F L_T = R_T S_M L_T F = T_M F
\] 
where $F R_T  \to R_T \tilde F$ is the mate of $\tilde F L_T=L_T F$. Verification of diagrams~\eqref{eq:dl} follows from those of $\tilde F$.

\begin{remark}\label{rem:kl-bicat}
\begin{enumerate}
\item In~\cite{FujiiKatsumataMelies-FoSSaCS2016}, the Kleisli and the Eilenberg-Moore constructions associated to an $\M$-graded monad $T$ were justified by constructing suitable $2$-categories of (co)lax cones in which $T$ was actually a monad. 
Using those constructions, we can see that our notion of a distributive law of an endofunctor over the graded monad turns to be the usual distributive law of a $1$-cell over a monad (within the respective $2$-category). 
However, in order to avoid the $2$-categorical machinery, we chose the direct approach. 

\item \label{rem:kl-bicat2} The appearance of the coend in the definition of the Kleisli category makes the machinery exhibited in~\cite{FujiiKatsumataMelies-FoSSaCS2016} delicate if $\M$ is not small nor strict. In fact, even the notion of an $\M$-graded monad is in itself of a bicategorical nature.
Hence it is more suitable to think of $\A_T$ as being a bicategory, with objects as above, $1$-cells $(M,X)\to (M',X')$ being triplets 
\[
(N, X \overset{f}{\to} T_N X', M \otimes N\overset{u}{\to} M')
\]
and $2$-cells $(N,f,u)\to (P,g,v):(M,X)\to (M',X')$ given by $w:N \to P$ such that $T_w \id_{X'} \circ f = g$ and $v \circ (\id_M \otimes w) = u$. 
The Kleisli category of~\cite{FujiiKatsumataMelies-FoSSaCS2016} is then recovered by applying the connected components functor to each hom-category~\cite{Benabou1967}. 

\end{enumerate}
\end{remark}

%======================================%

\section{Actegories and Costrong Functors}\label{sec:M-acts}

Let again $(\M,\otimes,I)$ be a monoidal category. 
Rewriting an $\M$-graded monad $T: \M \to [\A,\A]$ in the uncurried form $\M\times \A \to \A$, we may as well interpret it as a (lax) action of the monoidal category $\M$ on $\A$, as in~\cite{Benabou1967}. 
In particular, the two natural transformations $\eta:\id_\A \to T_I$, $\mu_{M,N}:T_M  T_N \to T_{M\otimes N}$, witnessing the lax monoidal structure of $T$, rewrite as
\[
\eta_X: X \to I \cdot X \ , \ \mu_{M,N,X}: M \cdot (N \cdot X) \to (M \otimes N)\cdot X
\]
where we denoted $M\cdot X$ instead of $T_M(X)$ to emphasise the action. 
From now on, we shall assume that $\eta$ and $\mu$ are isomorphisms, and talk about {\em actions} of the monoidal category $(\M,\otimes, I)$, or $(\M,\otimes, I)$-actegories~\cite{CapucciGavranovic2022,McCrudden2000,Pareigis1977}, instead of $\M$-graded monads.
Whenever the monoidal product on $\M$ is clear from the context, we shall refer to them as $\M$-actegories.
If $\eta$ and $\mu$ are identities, we shall call the action {\em strict.}
By~\cite{McCrudden2000}, any action of a monoidal category is equivalent to a strict one. Consequently, in the sequel we shall either ignore the natural isomorphisms $\eta, \mu$ and write as if the action was strict, or write these isomorphisms unlabelled if no confusion may arise.

\begin{example}\label{ex:actegories}

	\begin{enumerate}
	
		\item \label{ex:action=own tensor}  
		Any monoidal category $\M$ canonically acts on itself via the tensor product. We shall call this action the {\em regular action} of $\M$. 
		In particular, this is the case when $\M$ is cartesian or cocartesian, the tensor being given by the product, respectively by the sum.

		\item \label{ex:action=functor evaluation}
            Let $\A$ be an arbitrary category. 
            There is a strict action on $\A$ of the monoidal category $([\A,\A], \circ, \id_\A)$ given
by evaluation, sending $(F, X)$ to $FX$. In fact, the category of endofunctors $([\A,\A], \circ, \id_\A)$ is terminal in the $2$-category of monoidal categories acting on $\A$.\footnote{Formally, this $[\A,\A]$-actegory structure of $\A$ {\em classifies} actions on $\A$.}

		\item \label{ex:action=enrichment} 
		Let $\A$ be an $\M$-enriched category~\cite{Kelly-book}. If for any object $X$ of $\A$, the $\M$-valued hom functor $\A(X,-):\A \to \M$ admits a left adjoint $-*X$, the enriched category $\A$ is said to have {\em tensors} (or {\em copowers}) and the correspondence $(M,X) \mapsto M *X$ is an action of $\M$ on $\A$~\cite{JanelidzeKelly2001}. 
				
		\item \label{ex:action=(co)power} Any category with finite copowers\footnote{The copower of an object $X$ by a natural number $n$ is the coproduct of $n$-many copies of $X$.}
        is an $((\mathbb N,\le),+,0)$-actegory, with action given by copower
        \[
        (n,X) \mapsto n \cdot X = \coprod_n X
        \]
		Dually, any category with finite powers is an $((\mathbb N,\ge),\cdot,1)$-actegory with action
		\[
		(n,X) \mapsto X^n = \prod_n X
		\]		
		
		\end{enumerate}

\end{example}

More examples of actions of monoidal categories can be found in e.g.~\cite{CapucciGavranovic2022}.

%======================================%

\subsection{(Co)Strong functors as morphisms of actions of monoidal categories}\label{sec:cst-functors}

%Let $(\M,\otimes, I)$ be a monoidal category. 
In Section~\ref{sec:gr-mnd} we studied endofunctors which distribute over $\M$-graded monads. Here we shift to a more algebraic perspective: instead of distributive laws of functors over $\M$-graded monads, we talk about functors that respect or preserve the $\M$-action, and we adopt the terminology more commonly used in the literature.

\medskip

A functor $F:\M\to \M$ is called {\em strong} if it is equipped with a natural transformation $\st: - \otimes F(=) \to F(- \otimes =)$, called {\em strength}, subject to certain axioms expressing compatibility between the tensor product and the action of the functor~\cite{Kock1972}. 
While strong functors have origins in enriched category theory, they have since become central in the semantics of functional programming, modelling e.g.~computational effects or abstract syntax~\cite{FiorePlotkinTuri1999,Moggi1989}. 
The paper~\cite{McDermottUustalu2022} provides an overview on the theory of strong functors and strong monads.   
The notion of a strong functor admits a natural two-fold generalisation: first, extending from functors on \(\M\) to functors on \(\M\)-actegories; and second, considering functors with possibly distinct domain and codomain. 

\medskip

Just for convenience, let us properly say that an {\em $\M$-strong functor}\footnote{Also called {\em morphism of $\M$-actegories}, or {\em $\M$-linear functor}.} between two $\M$-actegories $\A$ and $\B$ is a functor $F:\A \to \B$ equipped with a natural transformation $\st:M\cdot F(X) \to F(M\cdot X)$, called {\em strength}, satisfying 
\begin{equation}\label{eq:lax-morphism}
\xymatrix@C=1pt@R=20pt{
(M\otimes N)\cdot F(X) 
\ar[d]_{\st}
\ar[rr]^{\cong}
&
&
M \cdot (N \cdot F(X))
\ar[d]^{\id \cdot \st}
&&
I \cdot F(X) 
\ar[rr]^{\st}
\ar[dr]_{\cong}
&
&
F(I \cdot X)
\ar[dl]^{\cong}
\\
F((M \otimes N)\cdot X)
\ar[dr]_{\cong}
&
&
M \cdot F(N \cdot X)
\ar[dl]^{\st}
&&&
F(X)
&
\\
&
F(M \cdot (N \cdot X))
}
\end{equation}

Dually, an $\M$-costrong functor is a functor $F:\A \to \B$ such that $F\op$ is a strong $\M$-functor from $\A\op$ to ${\B}\op$. Explicitly, $F$ is equipped with a natural transformation in the opposite direction, denoted $\cst:F(M\cdot X) \to M\cdot F(X)$ and called {\em costrength}, satisfying diagrams dual to those in~\eqref{eq:lax-morphism}:
\begin{equation}\label{eq:colax-morphism}
\xymatrix@C=1pt@R=20pt{
F(M \cdot (N\cdot X)) 
\ar[d]_{\cong}
\ar[rr]^{\cst}
&
&
M \cdot F(N \cdot X)
\ar[d]^{\id \cdot \cst}
&&
F(I \cdot X)
\ar[rr]^{\cst}
\ar[dr]_{\cong}
&
&
I \cdot F(X) 
\ar[dl]^{\cong}
\\
F((M \otimes N)\cdot X)
\ar[dr]_\cst
&
&
M \cdot (N \cdot F(X))
\ar[dl]^{\cong}
&&&
F(X)
&
\\
&
(M \otimes N) \cdot F(X)
}
\end{equation}

Intuitively, if the strength of an $\M$-strong functor {\em pushes} the action inside the functor, $\M$-costrong functors have the opposite behaviour, namely they {\em pull out} the $\M$-action~\cite{HughesVollmerOrchard2021}. This makes costrong functors interesting, among others, for various coinduction principles (see Section~\ref{sec:costrong-stream}).
 
\begin{remark}
In fact, what we now call {\em costrength} is nothing but the distributive law of Section~\ref{sec:gr-mnd} of $F$ over the $\M$-graded monad inducing the $\M$-action (in the case where $\A=\B$ and the actions coincide). 
The shift in terminology and notation is intended to parallel the theory of strong functors and strong monads, emphasizing the role of costrength as the dual counterpart of strength and making the analogy between the two settings more transparent.
\end{remark}

Despite the centrality of strength in applications, the dual notion of costrength, and more generally costrong functors, have received comparatively little attention in the literature. This asymmetry is somewhat surprising, especially given that many categorical phenomena such as comonadic structures, coalgebras, and context-dependent computation  naturally call for a dual treatment. 
In the context of a programming language, a functor provides a container to execute computation within it, but often we want to access results of a chain of computation directly.
Costrength, like the comonad counit, provides such means; however, in the case of costrength this is more nuanced, since it depends on the action with respect to which the functor is costrong.

\begin{remark}\label{ex:st}
As mentioned at the beginning of this section, strong functors first appeared in enriched category theory. 
If $\M$ is monoidal closed and self-enriched via the internal hom, then an endofunctor on $\M$ is $\M$-strong if and only if it is an $\M$-enriched functor~\cite{Kock1972}. 
In particular, if $(\M,\otimes,I)=(\set, \times, \one)$, the cartesian closed category of sets and functions, then {\em any functor} on $\set$ is automatically $\set$-enriched, hence it is $(\set, \times, \one)$-strong and its strength is {\em unique}. 
This explains the large popularity that strong functors gained in programming: they formalise the idea that a data-type constructor (a functor) can be applied not only on objects but also on the internal hom structure (the objects representing ``mappings between objects''). 

The bare dualisation of the above does not go as smoothly as one might expect:%
\footnote{See for example~\cite{ChoudhuryGay2025}, where dualisation of lambda-calculus is discussed.} 
First, {\em coclosed} monoidal categories are much rarer than closed ones, apart from the trivial example obtained by taking the dual of a closed monoidal category. 
In particular, this is the case for the monoidal products that typically arise in the semantics of programming languages.%
\footnote{We should, however, mention some well-known examples: 
First, the Klesli category of the continuation monad on $\set$ is cocartesian coclosed. 
Next, $*$-autonomous categories like sup-lattices are coclosed (with respect to the second tensor product).
In particular, compact closed categories, like finite dimensional vector spaces, are both closed and coclosed as monoidal categories. 
Thirdly, the category of polynomial $\set$-functors, endowed with functor composition as monoidal product, is coclosed -- the cohom being given by left Kan extension~\cite{NiuSpivak}.} %  
Second, to the best of authors' knowledge, there is no developed theory of {\em co}enriched categories.  

\end{remark}

\begin{remark}
The notions of {\em costrong functor} and {\em costrength} had first been considered, to the best of the authors' knowledge, in~\cite{BluteCockettSeely1997}, in the framework of (co)context categories. 
They subsequently had scarce appearances in the literature, usually focused on a specific actegory, e.g.~\cite{HansenKlin2011,HeunenKarvonen2015,McDermottUustalu2022}, and without a comprehensive analysis of their properties or potential applications. 
\end{remark}

A word of caution: the terms {\em costrong functor} and {\em costrength} have also been used to describe endofunctors on monoidal categories equipped with {\em right strengths} $F(-)\otimes = \, \to F(-\otimes =)$. Given that such functors are not considered here, we believe there will be no confusion.

\medskip

An {\em $\M$-strong natural transformation} between $\M$-strong functors $(F,\st),(G,\st): \A \to \B$ is a natural transformation $\alpha:F \to G$ satisfying 
\[
\xymatrix{
M \cdot F(X)
\ar[r]^{\id \cdot \alpha}
\ar[d]_\st
&
M \cdot G(X)
\ar[d]^\st
\\
F(M\cdot X) 
\ar[r]_\alpha
&
G(M\cdot X)
}
\]
$\M$-costrong natural transformations between $\M$-costrong functors are defined similarly, by reversing the direction of the vertical arrows in the diagram above.

\medskip

\begin{remark}
One important aspect of $\M$-(co)strong functors and $\M$-(co)strong natural transformations, is their {\em compositionality}. 
Although it comes {\em for free} from the perspective of category theory (see Proposition~\ref{prop:compos-dl} and Remark~\ref{rem:compos-dl}), compositionality is one of the key features in the semantics of computation. 
Formally, $\M$-actions, $\M$-costrong functors and $\M$-costrong natural transformations determine a $2$-category $\mact_\cst$~\cite{BluteCockettSeely1997}. 
In particular, $\mact_\cst(\A,\B)$ denotes the hom-category of $\M$-costrong functors and costrong natural transformations between two $\M$-actegories $\A$ and $\B$. 
As it is the case in any $2$-category, $\M$-costrong endofunctors determine a {\em monoidal} category $(\mact_\cst(\A,\A), \circ, \id)$ with respect to functor composition. 
\end{remark}

\begin{example}\label{ex:costrong}
	Let $\A$ be an $\M$-actegory. 
	In case $\M$ is braided (or even symmetric) monoidal, the action of an object $M$ of $\M$ on $\A$ determines a simultaneously $\M$-strong and $\M$-costrong {\tt Writer}-like functor $F_M:\A \to \A$, given by $F_M (X) = M \cdot X$ (the (co)strength being induced by the braiding/symmetry).%
	\footnote{It is the endofunctor part that we intend to emphasise in this section, not the (graded) monad aspect.} %
	If additionally $M$ is a (co)monoid in $\M$, then $F_M$ becomes a (co)monad, with associated (co)unit and (co)multiplication being $\M$-(co)strong natural transformations.
	That is, $F_M$ is an $\M$-(co)strong (co)monad.%
	\footnote{By an $\M$-(co)strong (co)monad we mean one for which the (co)multiplication and the (co)unit are compatible with the (co)strength in the obvious way.} %
	In particular, this is the case for cartesian monoidal categories acting on themselves, like $(\set,\times,\one)$, where every object $M$ naturally carries a comonoid structure, hence induces the $(\set,\times,\one)$-strong and costrong comonad $F_M (X) = M \times X$. 
	As mentioned above, the costrength $\cst : N \times (M \times X)  \to M \times (N \times X)$ is determined by the symmetry of the cartesian product. 
	The results of the subsequent Section~\ref{sec:costrong=copointed} will show that this is {\em the only $(\set, \times, \one)$-costrength} on the {\tt Writer} comonad (even on the underlying functor part). 
	
\end{example}

More examples of familiar functors on $\set$, costrong with respect to various monoidal actions, can be found below. 
In all these cases, whenever the functor in question is a (co)monad, both the (co)unit and the (co)multiplication are costrong:

\begin{example}\label{ex:cst-functors}
	\begin{enumerate}
	
		\item Consider first $(\set, \times, \one)$ acting on itself. 
        \begin{enumerate}
            \item For the {\tt Reader} monad $X \mapsto [S,X]$, $(\set,\times, \one)$-costrengths $[S,M\times X] \to M \times [S,X]$  are in one-to-one correspondence with elements of $S$.\footnote{For a formal argument, see Section~\ref{sec:costrong=copointed}.} 
            %
            %In particular, we see being costrong is a structure, not a property. 
            
            \item \label{ex:costrong-comonad-not-unique-cst}
                Composing the $(\set, \times, \one)$-costrong {\tt Writer} comonad from  Example~\ref{ex:costrong} with the {\tt Reader} monad above, we obtain $(\set, \times, \one)$-costrengths (again not unique!) on the {\tt State} monad $X \mapsto [S,S\times X]$, respectively on the {\tt Costate} comonad $X \mapsto S \times [S,X]$. 
                This is one instance where the compositionality of costregths proves useful.             

            \item To provide a class of negative examples, observe that $\set$-functors $F$ such that $F\emptyset \neq \emptyset$ cannot possess a $(\set, \times, \one)$-costrength. 
        \end{enumerate}

        \item Take now $(\set,+,\mathbb 0)$, again acting on itself. 
        \begin{enumerate}
        
        	\item The powerset functor $\mathcal P$ is $(\set,+,\mathbb 0)$-costrong, where $\cst:\mathcal P(M+X) \to M+\mathcal P(X)$ forgets all possible occurrences of elements of $M$ in a subset of $M+X$, resulting thus a subset of $X$, which subsequently embeds in $M+\mathcal P(X)$. 

            \item The above example generalises to the class of {\em filterable} functors: 
            a filterable functor $F$ is a functor with a natural transformation $\phi: F {\tt Maybe}\rightarrow F$. 
            The costrength is given by $F(M + X) \xrightarrow{!+\id} F(\mathbb 1 + X) \xrightarrow{\phi} F X \xrightarrow{\mathrm{inr}} M + FX$. 
            By extension, every functor composed with a filterable functor on the left produces another filterable functor, hence $(\set,+,\mathbb 0)$-costrong.

            \item Because $\set$ is a distributive category, we obtain for free that the {\tt Writer} comonad is also $(\set,+,\mathbb 0)$-costrong, with costrength $S \times (M + X) \cong (S \times M) + (S \times X) \xrightarrow{} M + (S \times X)$, where the last arrow is induced by the projection on the second component.

        \end{enumerate}

        \item Consider the action of $(\set\op,\times, \one)$ on $\set$ given by exponentiation. 
Since every functor $F:\set \to \set$ is (uniquely) $(\set, \times, \one)$-strong, taking the mate of its strength $M \times F(-) \to F(M\times -)$ shows that $F$ has a unique $(\set\op,\times, \one)$-costrength $\cst:F [M,X]\to [M,FX]$. Explicitly, $\cst (t)(m) = F(\mathsf{ev}_m(t))$, where $\mathsf{ev}_m:[M,X]\to X$ is given by evaluation $\mathsf{ev}_m(f) = f(m)$~\cite{HansenKlin2011}.

		\item Consider an arbitrary category $\A$ with copowers and the strict monoidal category $([\A,\A],\circ, \id_\A)$ acting on $\A$ via functor application. 
		Copowering with a fixed set gives a $([\A,\A],\circ, \id_\A)$-costrong functor $F:\A \to \A$, $F(X) = S \cdot X$. The costrength now represents a distributive law of $F$ over {\em all functors}. Its existence is guaranteed by the fact that every functor {\em laxly preserves colimits}, in particular copowers. 		

		\item \label{ex:traversable}

		Finally, let us consider the strict monoidal category $[\set,\set]_{\mathsf{appl}}$ of {\em applicative} functors and {\em applicative} natural transformations~\cite{McBridePaterson2008}, acting on $\set$ by evaluation. Then a $[\set,\set]_{\mathsf{appl}}$-costrong functor is known as {\em traversable}~\cite{JaskelioffRypacek2012}. 

		Traversables can be equivalently characterised as Eilenberg-Moore coalgebras for the comonad on $[\set,\set]$ mapping a functor $T$ to the polynomial $\sum_n T(n)\times \id^n$. Consequently, if $T$ is traversable, then for any set $X$, an element $t\in T X$ is uniquely determined by a unique $n\in \mathbb N$, unique values $t_n \in T(n)$ and $x \in X^n$ such that $t = T(x)(t_n)$. 

	\item One can consider a variant of the last example above, by looking at the canonical action of the strict monoidal category of {\em all} functors $([\set,\set],\circ, \id_\set)$ on $\set$. 
	Repeating with mild adjustments the reasoning for traversable functors from~\cite{Roman2019}, we see that the category of $([\set,\set],\circ, \id_\set)$-costrong functors on $\set$ is isomorphic to the Eilenberg Moore category of coalgebras for the comonad on $[\set,\set]$ given by $T \mapsto T(\one) \times \id$. Hence each $t\in TX$, for a set $X$, is uniquely determined by values $t_\one\in T(\one)$ and $x\in X\cong \set(\one,X)$ such that $t=T(x)(t_\one)$. In particular, each $([\set,\set],\circ, \id_\set)$-costrong functor comes equipped with a copoint $T \to \id$. 
As expected, these $([\set,\set],\circ, \id_\set)$-costrong functors are significantly more restrictive than the traversable ones. 

	\end{enumerate}
\end{example}

A simple but efficient method to produce monoidal actions is by considering natural numbers acting on various endofunctor categories:

\begin{example}
\begin{enumerate}
	\item Let $F:\A \to \A$ be a functor. The discrete additive monoid $(\mathbb N, +, 0)$ strictly acts on $[\A, \A]$ by $n \cdot_F G = F^n G$. This graded monad has been considered e.g. in~\cite{MiliusPattinsonSchroder-calco2015}.
	
	\item If we want to extend the above action to one of the {\em ordered} additive monoid $((\mathbb N, \le), +, 0)$, more data is necessary, as follows: 
	Let now $F:\A \to \A$ be a {\em well-pointed} functor, that is, a functor endowed with a natural transformation $\eta: \id \to F$ (a {\em point} of $F$) such that $F\eta = \eta F$~\cite{Kelly1980}.
	Then the previous formula $n \cdot_F G = F^n G$ produces now an $((\mathbb N, \le), +, 0)$-action on $[\A, \A]$. 
	Observe that the order relation $m \le n$ is witnessed by a natural transformation $F^m G \to F^n G$, which is {\em well}-defined due to $F\eta = \eta F$. 	
	\end{enumerate}
\end{example}

In view of the above examples, a functor may be costrong with respect to a given $\M$-action in more than one way, or it may not be costrong at all. It is thus desirable to find conditions guaranteeing the existence or uniqueness of costrength.
Regarding existence of costrength, the next Section~\ref{sec:costrong=copointed} provides some conditions in the particular case of a cartesian monoidal category acting upon itself. 
Uniqueness has first been considered by Moggi for strong monads on cartesian categories~\cite{Moggi1991}, and recently generalised by Sato for strong monads on monoidal categories, seen as actegories with the regular action~\cite{Sato2018}. 
Just for completeness, we state below the corresponding dual result for costrengths and endofunctors (not necessarily (co)monads). We should mention, however, that we found that in practice the  hypotheses are rarely verified (see the remark following this proposition): 

\medskip

\begin{proposition}[Uniqueness of costrength~\cite{Moggi1991,Sato2018}]
Let $(\M,\otimes, I)$ be a monoidal category. Assume the following:
\begin{enumerate}
\item The monoidal unit is a cogenerator,%
\footnote{Recall that an object $I$ in a category $\M$ is called a {\em cogenerator} if for any pair of arrows $f,g:X \to Y$ in $\M$, $f=g$ if and only if $h \circ f = h \circ g$ for any $h:Y \to I$. For example, in $\set$, any set with at least two elements is a cogenerator.  
%A category possessing a generator (the dual notion) is also said to be {\em well-pointed}, or {\em having enough points}. 
} 
\item \label{cond-cosep} For any morphism $h:X \otimes Y \to I$ there are $f:X \to I$ and $g:Y \to I$ such that the composite $X \otimes Y \overset{f \otimes g}{\to}I\otimes I \cong I$ coincides with $h$. 
\end{enumerate}
Let $F:\M \to \M$ be a costrong functor with respect to the regular action of $\M$. Then its costrength is uniquely determined by 
\[
\xymatrix{
F(X \otimes Y)
\ar[r]^{\cst}
\ar[d]_{F(f \otimes \id)} 
& 
X \otimes FY 
\ar[r]^-{f \otimes g}
&
I \otimes I 
\ar[d]^{\cong}
\\
F(I \otimes Y)
\ar[r]^{\cong}
&
F(Y)
\ar[r]^{g}
&
I
}
\]
for any objects $X,Y$ and any morphisms $f:X \to I$ and $g:FY \to I$, where the unlabelled isomorphisms refer to the left associators of the monoidal category. 
\end{proposition}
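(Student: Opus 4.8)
The plan is to show that the displayed composite is the \emph{only} arrow $F(X\otimes Y)\to X\otimes FY$ that can serve as a component of a costrength satisfying~\eqref{eq:colax-morphism}. The strategy is to exploit the cogenerator hypothesis to reduce equality of two candidate costrengths to equality after postcomposing with arbitrary maps into $I$, then use condition~\ref{cond-cosep} to split such a test map as $f\otimes g$, and finally use the unit coherence axiom of~\eqref{eq:colax-morphism} to pin down the value of $\cst$ once it has been ``pushed down'' to the unit.

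First I would fix a costrong functor $(F,\cst)$ and two objects $X,Y$. Since $I$ is a cogenerator of $\M$, two arrows $F(X\otimes Y)\to X\otimes FY$ agree iff they agree after postcomposition with every $h:X\otimes FY\to I$. By condition~\ref{cond-cosep} applied to such an $h$, we may write $h$ as $X\otimes FY\xrightarrow{f\otimes g}I\otimes I\cong I$ for suitable $f:X\to I$ and $g:FY\to I$. So it suffices to compute $(f\otimes g)\circ\cst_{X,Y}$ for arbitrary $f,g$ and show it equals the displayed composite; since the displayed composite visibly depends only on $F$, $f$, $g$ (and the fixed coherence isomorphisms), this forces uniqueness.

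Next I would carry out that computation using naturality of $\cst$ together with the unit axiom. By naturality of $\cst$ in the first variable applied to $f:X\to I$, the square
\[
\xymatrix@C=30pt{
F(X\otimes Y)\ar[r]^{\cst_{X,Y}}\ar[d]_{F(f\otimes\id)} & X\otimes FY\ar[d]^{f\otimes\id}\\
F(I\otimes Y)\ar[r]^{\cst_{I,Y}} & I\otimes FY
}
\]
commutes, so $(f\otimes g)\circ\cst_{X,Y}=(\id\otimes g)\circ(f\otimes\id\otimes\id)\circ\cdots$ reduces the problem to understanding $\cst_{I,Y}$. But the unit triangle of~\eqref{eq:colax-morphism} says precisely that $\cst_{I,Y}:F(I\otimes Y)\to I\otimes FY$ is the composite of the coherence isomorphisms $F(I\otimes Y)\cong FY\cong I\otimes FY$. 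Postcomposing with $\id\otimes g$ and using that $I\otimes I\cong I$ is the (left = right) unitor then yields exactly the bottom-row composite $F(X\otimes Y)\xrightarrow{F(f\otimes\id)}F(I\otimes Y)\cong FY\xrightarrow{g}I$ tensored appropriately, i.e.\ the displayed diagram.

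The main obstacle I anticipate is purely bookkeeping: keeping track of the several unlabelled associator and unitor isomorphisms (those suppressed in~\eqref{eq:colax-morphism} and in the statement) and checking that the coherence-theoretic rearrangements — in particular the identification of the two ways $I\otimes I$ maps to $I$, and the interaction of the unit axiom with naturality — all match up on the nose rather than merely up to a further isomorphism. There is no conceptual difficulty once the cogenerator-plus-cosplitting reduction is in place; the content is entirely in the unit axiom of~\eqref{eq:colax-morphism}, and the pentagon/associativity axiom is not needed for uniqueness (it would be needed only to verify that the formula does define a costrength, which is not asserted here).
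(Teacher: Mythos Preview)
The paper does not actually give a proof of this proposition; it is stated as the dual of results in~\cite{Moggi1991,Sato2018} and followed only by a remark on when condition~\ref{cond-cosep} holds. So there is nothing in the paper to compare your argument against.

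That said, your proof is correct and is exactly the argument one obtains by dualising Moggi's original well-pointedness proof: the cogenerator hypothesis reduces equality of two candidate costrength components to equality after postcomposition with an arbitrary $h:X\otimes FY\to I$; condition~\ref{cond-cosep} lets you write any such $h$ as $(f\otimes g)$ followed by the unitor; naturality of $\cst$ in the first variable pushes $f$ past $\cst$; and the unit triangle of~\eqref{eq:colax-morphism} identifies $\cst_{I,Y}$ with the coherence isomorphism, leaving the bottom composite of the displayed square, which is independent of $\cst$. Your diagnosis that the associativity/pentagon axiom is not used for uniqueness is also correct. The only minor comment is cosmetic: in the sentence ``$(f\otimes g)\circ\cst_{X,Y}=(\id\otimes g)\circ(f\otimes\id\otimes\id)\circ\cdots$'' the ellipsis should simply be $\cst_{X,Y}$, and the factor $f\otimes\id\otimes\id$ should be $f\otimes\id$; but the surrounding prose makes clear what you mean.
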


\begin{remark}
The second condition above is guaranteed in the case of (semi-)cocartesian monoidal categories.\footnote{\label{footnote:semicocartesian}A semi-cocartesian category is a monoidal category for which the monoidal unit is also the initial object $\mathbb 0$. Consequently, there are coprojections $X \rightarrow X \otimes Y \leftarrow Y$. If codiagonals $X \otimes X \to X$ exist such that the composites $X \cong X \otimes \mathbb 0 \to X \otimes X \to X$, $X \cong \mathbb 0 \otimes X \to X \otimes X \to X$ are identities, then the monoidal product is also the coproduct and the category is cocartesian. See e.g.~\cite{semicartesian,Fritz2020,semicartesian2} for the dual case of semicartesian categories.} % 
Adding the requirement that the monoidal unit (that is, the initial object) has to be a cogenerator is more restrictive. 
An example comes from propositional logic: in the category of Boolean algebras, $\two$ is both an initial object and a cogenerator. 
Another example arises from the semantics of quantum computing, more precisely from the category of (finite-dimensional) Hilbert spaces. 
Duality is also a source of examples: just take the opposite of any monoidal category for which the monoidal unit is a {\em generator}, such as $(\set, \times, \one)$. 
\end{remark}

%======================================%

%======================================%

\subsection{Costrong functors on cartesian categories}\label{sec:costrong=copointed}

Among monoidal categories occurring in programming, the case that arises most frequently is the one where the monoidal structure is given by the cartesian product, with $(\set,\times)$ as main example. 
It is well known that every $\set$-endofunctor is $(\set,\times)$-strong, in a unique way. 
This naturally raises the question: what can be said about costrong ones?
We shall see that there is an isomorphism between the categories of costrong endofunctors and copointed ones on cartesian categories.

\medskip

Let thus $\M$ be a cartesian category. Consider $\M$ as an $\M$-actegory with the regular action. 
The preliminary result below shows that application of costrength followed by second projection coincides with functor application of this second projection.

\begin{lemma}\label{lem:costrong=copointed}
Let $F:\M \to \M$ be a costrong functor, with costrength $\cst: F(M \times X) \to M \times FX$. 
Then the costrength commutes with the second projection: $\pi_2 \circ \cst = F\pi_2:F(M \times X) \to FX$. 
    \begin{equation}\label{eq:cst-proj2}
    \xymatrix@C=35pt@R=17pt{
    F(M \times X) 
    \ar[r]^-{\cst}
    \ar`d[dr]_(.75){F \pi_2}[dr]
    &
    M \times FX 
    \ar[d]^-{\pi_2}
    \\
    &
    FX
    }
    \end{equation}
\end{lemma}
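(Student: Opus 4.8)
The plan is to reduce the general identity to the unit axiom of the costrength, which already expresses exactly the claim in the special case $M = I = \one$, and then transport it to arbitrary $M$ by naturality of $\cst$ in its first argument.

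First I would record two elementary factorisations valid in any cartesian category. Since the monoidal structure on $\M$ is the cartesian one, the left unitor $\one \times Z \xrightarrow{\cong} Z$ is the projection $\pi_2$ itself; consequently the projection $\pi_2 : M \times Z \to Z$ decomposes as $M \times Z \xrightarrow{\,!_M \times \id\,} \one \times Z \xrightarrow{\ \cong\ } Z$, where $!_M : M \to \one$ is the unique morphism into the terminal object. Instantiating $Z := X$ and $Z := F(X)$ gives $F(\pi_2) = F(\lambda_X) \circ F(!_M \times \id_X)$ and $\pi_2 = \lambda_{F(X)} \circ (!_M \times \id_{F(X)})$, where $\lambda$ denotes the left unitor.

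Next I would use naturality of $\cst : F(- \times =) \Rightarrow - \times F(=)$ in the first variable along $!_M : M \to \one$ (with the second variable held at $X$), giving the commuting square $(!_M \times \id_{F(X)}) \circ \cst_{M,X} = \cst_{\one, X} \circ F(!_M \times \id_X)$. Chaining the pieces:
\[
\pi_2 \circ \cst_{M,X}
= \lambda_{F(X)} \circ (!_M \times \id) \circ \cst_{M,X}
= \lambda_{F(X)} \circ \cst_{\one, X} \circ F(!_M \times \id).
\]
The unit triangle of~\eqref{eq:colax-morphism} states precisely $\lambda_{F(X)} \circ \cst_{\one, X} = F(\lambda_X)$, so the last term equals $F(\lambda_X) \circ F(!_M \times \id) = F\bigl(\lambda_X \circ (!_M \times \id)\bigr) = F(\pi_2)$, which is the asserted equality~\eqref{eq:cst-proj2}.

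I do not expect a real obstacle: the whole argument is ``naturality along $M \to \one$, then the unit axiom''. The only point requiring a little care is the bookkeeping of the unitor isomorphisms, and even that is painless here because in a cartesian category they are honest projections — indeed, if one adopts the strictified action permitted in the preliminaries, the unitors vanish and nothing remains but the two cited facts.
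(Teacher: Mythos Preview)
Your proof is correct and is essentially the same argument as the paper's: both factor $\pi_2$ through $!_M \times \id$ followed by the unitor, invoke naturality of $\cst$ in the first variable along $!_M$, and then apply the unit axiom of the costrength. The paper packages these three steps into a single commutative diagram, while you present them equationally, but the content is identical.
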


\begin{proof}
    This follows from
    \[
    \xymatrix{
    &
    F(M\times X) 
    \ar[r]^-{\cst}
    \ar[d]|-{F(! \times \id)}
    %\ar@/_6.75ex/[dd]_-{F\pi_2}
    & 
    M \times FX
    \ar[r]^-{\pi_2}
    \ar[d]|-{! \times \id}
    &
    FX
    \ar@{=}[d]
    \\
    &
    F(\one \times X)
    %\ar@{}[drr]|{(*)}
    \ar[r]^-{\cst}
    \ar[d]|-{F\pi_2}
    &
    \one \times FX
    \ar[r]^-{\pi_2}
    &
    FX
    \ar@{-}@<+.25ex>`d[dll][dll]
    \ar@{-}@<-.25ex>`d[dll][dll]
    %\ar@{<-}`d[ll]`[ll]^-{F\pi_2}
    %\ar@{<-}`d[ll]`u[ull]`[ull]
    %\ar@{<-}`r[rru]_{\Lan_JA\circ B}[uurr]&
    %\ar@{<-}`d`[lll]`[u][lllu]
    %\ar`d`[llllll]`[uu][llllluu]
    \\
    &
    FX 
    \ar@{<-}`l[luu]`[uu]^-{F\pi_2}[uu]
    }
    \]
    where $!$ denotes the unique arrow to the terminal object $\one$ of $\M$. The two top rectangles in the diagram above commute by naturality, while the bottom one is commutative due to the interaction between the costrength and the unit of the monoidal product (the unit being the terminal object). 
\end{proof}

Recall now that an endofunctor $F:\M \to \M$ is called {\em copointed} if it is endowed with a natural transformation $\epsilon:F \to \id_\M$. 
A natural transformation between copointed endofunctors $\alpha:F \to G$ is itself called {\em copointed} if 
    \[
    \xymatrix{
    F \ar[rr]^{\alpha}
    \ar[dr]_{\epsilon}
    &
    &
    G
    \ar[dl]^{\epsilon}
    \\
    &
    \id_{\M}
    &
    }
    \]
    We obtain thus the category of copointed endofunctors and copointed natural transformations, that is, the slice category $[\M,\M]\downarrow \id_\M$. The forgetful functor $[\M,\M]\downarrow \id_\M \to [\M,\M]$ is comonadic if $\M$ is cartesian, with right adjoint sending an arbitrary functor $F:\M \to \M$ to $\id_\M \times F$.

\medskip

Consider also the category $[\M,\M]_{\cst}$ of costrong endofunctors on $\M$ and costrong natural transformations.

\begin{theorem}\label{thm:costrong=copointed}
Let $\M$ be a cartesian category. 
There is an isomorphism between the category $[\M,\M]_{\cst}$ of costrong endofunctors on $\M$, and the category of copointed endofunctors $[\M,\M]\downarrow \id_\M$. 
\end{theorem}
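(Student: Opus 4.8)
The plan is to construct an isomorphism of categories directly, by giving mutually inverse functors $\Phi : [\M,\M]_{\cst} \to [\M,\M]\downarrow \id_\M$ and $\Psi$ in the other direction, and then checking they are strictly inverse both on objects and on morphisms. For the object part, given a costrong functor $(F,\cst)$, I would define its copoint $\epsilon : F \to \id_\M$ as the composite $F(X) \cong F(\one \times X) \xrightarrow{\cst} \one \times F(X) \xrightarrow{\pi_1} \ldots$ — no, more precisely, using the first projection: $\epsilon_X := \bigl(FX \cong F(X \times \one)\xrightarrow{\cst} X \times F\one\xrightarrow{\pi_1} X\bigr)$, using the \emph{right} action isomorphism, or symmetrically via $F(\one\times X)$ and extracting the $\M$-component. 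Conversely, given a copointed functor $(F,\epsilon)$, define a costrength $\cst : F(M\times X)\to M\times F(X)$ as $\langle \epsilon_{M\times X}\,;\,\pi_1 ,\ F(\pi_2)\rangle$, i.e. pair the map $F(M\times X)\xrightarrow{\epsilon} M\times X \xrightarrow{\pi_1} M$ with $F(M\times X)\xrightarrow{F(\pi_2)} F(X)$. This is forced by Lemma~\ref{lem:costrong=copointed} together with the unit axiom of a costrong functor, which is exactly why the correspondence is bijective rather than merely adjoint.

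Next I would verify that each construction lands in the right place and that the two are inverse on objects. For $\Psi\Phi = \id$: starting from $(F,\cst)$, form $\epsilon$ as above, then re-form a costrength $\cst'$ from $\epsilon$; one must show $\cst' = \cst$. The first component of $\cst'$ is $\pi_1\circ\epsilon_{M\times X} = \pi_1 \circ F(\pi_1) \circ (\text{assoc})$, which by naturality of $\cst$ applied to $\pi_1 : M\times X\to M$ (and the $\pi_1$-identity dual to Lemma~\ref{lem:costrong=copointed}) equals $\pi_1\circ\cst$; the second component is $F(\pi_2)$, which equals $\pi_2\circ\cst$ by Lemma~\ref{lem:costrong=copointed} itself. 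Since $M\times F(X)$ is a product, matching both components gives $\cst' = \cst$. For $\Phi\Psi=\id$: starting from $(F,\epsilon)$, build $\cst$, then extract its copoint; using the unit coherence isomorphism one computes $\pi_1\circ\cst_{\one,X}$ (suitably transported) $= \epsilon_X$ directly from the definition. One also has to check that the $\cst$ built from a copointed functor actually satisfies the two diagrams in~\eqref{eq:colax-morphism}; the unit diagram is immediate, and the associativity/pentagon diagram reduces, after projecting onto each factor of $M\times(N\times F(X))$, to naturality of $\epsilon$ and functoriality of $F$ — this is the one genuinely computational check.

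For the morphism part, I would show that a natural transformation $\alpha:F\to G$ is $\M$-costrong (with respect to the $\cst$'s) if and only if it is copointed (with respect to the induced $\epsilon$'s), so that $\Phi,\Psi$ act as the identity on underlying natural transformations. One direction: if $\alpha$ is costrong, composing the costrength-compatibility square with $\pi_1$ and using the definition of $\epsilon$ gives $\epsilon^G\circ\alpha = \epsilon^F$. Conversely, if $\alpha$ is copointed, then since $\cst$ was defined componentwise via $\pi_1\circ\epsilon$ and $F(\pi_2)$, the costrength square for $\alpha$ can be checked one product-component at a time: the $M$-component uses copointedness of $\alpha$, the $F/G$-component uses naturality of $\alpha$ at $\pi_2$. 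Functoriality of $\Phi$ and $\Psi$ (preservation of identities and composites) is then trivial since they do not change the underlying functor or natural transformation.

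The main obstacle I anticipate is not any single diagram but keeping the coherence isomorphisms $I\cdot X\cong X$ (here $\one\times X\cong X$ and $X\times\one\cong X$) bookkept correctly throughout — the paper works "as if the action were strict", but the copoint $\epsilon$ is defined precisely by collapsing a unit, so in the strict reading the statement becomes clean ($\epsilon_X = \pi_1\circ\cst_{X,\text{(with }X\text{ in the second slot? choose a convention)}}$) while in the non-strict reading one must insert the left/right unitors in the right spots. Settling on one convention (say, extract the copoint from $F(\one\times X)\xrightarrow{\cst}\one\times FX\xrightarrow{\ \cong\ }FX$ composed with... wait, that gives back $FX$, not a map to $X$) forces me instead to use the $M$-projection: $\epsilon_X = \bigl(FX\xrightarrow{\ \cong\ }F(X\times\one)\xrightarrow{\cst}X\times F\one\xrightarrow{\pi_1}X\bigr)$. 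Once this convention is fixed, everything else is a diagram chase powered by Lemma~\ref{lem:costrong=copointed}, naturality, and the universal property of the cartesian product.
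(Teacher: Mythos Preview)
Your proposal is correct and follows essentially the same approach as the paper: define $\Phi$ by extracting $\epsilon_X = \pi_1\circ\cst_{X,\one}\circ F(\cong)$ and $\Psi$ by pairing into the product, then check the two round-trips using Lemma~\ref{lem:costrong=copointed} and naturality. The only cosmetic difference is that the paper writes the first component of $\Psi(F,\epsilon)$ as $\epsilon_M\circ F\pi_1$ whereas you write $\pi_1\circ\epsilon_{M\times X}$; these agree by naturality of $\epsilon$ at $\pi_1$, so the constructions coincide, and your more explicit account of the morphism-level bijection and of the coherence bookkeeping simply fills in what the paper leaves as ``standard diagram chasing''.
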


\begin{proof}
    The correspondences are as follows: First, the functor  $\Phi:\mact_\cst(\M,\M) \to [\M,\M]\downarrow \id_\M$ maps $(F, \cst)$ to $(F,\widetilde\epsilon)$ where
    \[
    \widetilde\epsilon: \xymatrix{
    FM 
    \ar[r]^-{\cong}
    &
    F(M \times \one) 
    \ar[r]^{\cst}
    &
    M \times F\one 
    \ar[r]^-{\pi_1}
    &
    M
    }
    \]
    The naturality of $\widetilde\epsilon$ is clear. 
    The functor $\Phi$ acts as identity on arrows. 
    Observe also that the diagram below commutes:
    \[
    \xymatrix@C=35pt{
    &
    F(M \times X) 
    \ar[r]^{\cst}
    \ar`l[dl]_-{F\pi_1}[dl]
    \ar[dd]^{F(\id \times !)}
    &
    M \times FX
    \ar[dd]^{\id \times F!}
    \ar`r[ddr][ddr]^{\pi_1}
    \\
    F M
    \ar`d[dr][dr]^-\cong
    %\ar`l[dd]`[dd]_{\widetilde \epsilon}`[rr][drr]
    &
    &
    &
    \\
    &
    F(M \times \one)
    \ar[r]^{\cst}
    &
    M \times F \one
    \ar[r]^-{\pi_1}
    &
    M
    }
    \]
    Consequently, the first projection links the costrength and the copoint {\em that it induces}:
    \begin{equation}\label{cstcptfstproj1}
    \xymatrix@C=35pt{
    F(M \times X) 
    \ar[d]_{F\pi_1}
    \ar[r]^{\cst}
    & 
    M \times FX
    \ar[d]^{\pi_1}
    \\
    FM
    \ar[r]^{\widetilde\epsilon}
    & 
    M
    }
    \end{equation}
    Using~\eqref{cstcptfstproj1} and the naturality of $\widetilde\epsilon$, we obtain that $\pi_1 \circ \cst = \pi_1 \circ \widetilde \epsilon$, as in the diagram below:    
    \begin{equation}\label{cstcptfstproj2}
    \xymatrix{
    F(M \times X)
    \ar[d]_{\widetilde\epsilon}
    \ar[r]^-{\cst}
    &
    M \times FX 
    \ar[d]^{\pi_1}
    \\
    M \times X 
    \ar[r]^-{\pi_1}
    &
    M
    }
    \end{equation}
    In the opposite direction, the functor $\Psi:[\M,\M]\downarrow \id_\M \to \mact_\cst(\M,\M) $ maps a copointed $(F, \epsilon)$ to $(F, \widetilde\cst)$, with costrength given by
    \[
    \widetilde\cst:\xymatrix@C=50pt{
    F(M \times X)
    \ar[r]^-{\langle F\pi_1,F\pi_2\rangle}
    &
    F M \times F X 
    \ar[r]^-{\epsilon \times \id}
    &
    M \times F X
    }
    \]
    where $\langle-,-\rangle$ denotes the pairing into the product.
    That $\widetilde\cst$ is natural and satisfies~\eqref{eq:colax-morphism} follows from the fact that any functor on a cartesian category is automatically colax monoidal, and so is any natural transformation.
    Also $\Psi$ acts as identity on arrows. 
    The commutativity of 
    \[
    \xymatrix@C=50pt{
    F(M\times X)
    \ar[r]^-{\langle F\pi_1, F\pi_2\rangle}
    \ar`d[dr]_-{F\pi_1}[dr]
    &
    F M \times F X
    \ar[r]^{\epsilon \times \id}
    \ar[d]^{\pi_1}
    &
    M \times F X
    \ar[d]^{\pi_1}
    \\
    &
    F M 
    \ar[r]^{\epsilon}
    &
    M}
    \]
    shows that Equations~\eqref{cstcptfstproj1} and~\eqref{cstcptfstproj2} hold also for a copointed functor $(F,\epsilon)$ and the {\em induced} costrength $\widetilde\cst$. In particular, $\pi_1 \circ \widetilde \cst = \pi_1 \circ \epsilon$. 
    This in turn entails $\Psi \circ \Phi=\id$: starting with $(F,\cst)$, constructing $\Phi(F,\cst)=(F,\widetilde \epsilon)$ and subsequently $\Psi(F,\widetilde \epsilon)= (F, \widetilde{\widetilde{\cst}})$, we have that 
    \[
    \pi_1 \circ \widetilde{\widetilde{\cst}} = \pi_1 \circ \widetilde\epsilon = \pi_1 \circ \cst
    \]
    by Equation~\eqref{cstcptfstproj2}, and $\pi_2 \circ \widetilde{\widetilde{\cst}} =F\pi_2 = \pi_2 \circ \cst$ by Lemma~\ref{lem:costrong=copointed}. Hence $\widetilde{\widetilde{\cst}}=\cst$.

    Finally, the relation $\Phi \circ \Psi = \id$ follows from 
    \[
    \xymatrix@C=50pt{
    F M 
    \ar[r]^-{\cong} 
    \ar[d]_{\epsilon}
    & 
    F(M\times \one) 
    \ar[r]^-{\langle F\pi_1, F\pi_2\rangle}
    \ar[d]^{\epsilon}
    &
    F M\times F \one
    \ar[r]^{\epsilon \times \id}
    \ar[d]^{\epsilon \times \epsilon}
    &
    M \times F \one
    \ar[d]^{\pi_1} 
    \ar[dl]_-{\id \times \epsilon}
    \\
    M
    \ar[r]^{\cong}
    &
    M \times \one 
    \ar@{=}[r]
    &
    M \times \one 
    \ar[r]^{\pi_1}
    &
    M
    &
    }
    \]
    where we used that any natural transformation is automatically colax monoidal with respect to the cartesian product.
    
\end{proof}

The gist of the above theorem is that on a cartesian category, to give a costrength for an endofunctor is the same as to give a {\em copoint}, that is, a natural transformation with codomain the identity functor. 
In particular, we see now that costrengths for the {\tt Reader} monad $[S,-]$ on $(\set,\times, \one)$ (or any other cartesian category) are in one-to-one correspondence with natural transformations $[S,X]\to X$, which by Yoneda lemma, correspond to points (elements) of $S$. 
Regarding the {\tt Writer} functor (comonad) $S \times -$, another application of the Yoneda lemma shows that there is only one natural transformation $S \times X \to X$, namely the second projection, hence there is only one way in which the {\tt Writer} functor can be $(\set,\times, \one)$-costrong.

\begin{corollary} 
Comonads on cartesian categories are costrong.
\end{corollary}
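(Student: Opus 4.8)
The plan is to deduce this directly from Theorem~\ref{thm:costrong=copointed}, which establishes an isomorphism $[\M,\M]_{\cst} \cong [\M,\M]\downarrow \id_\M$ for any cartesian category $\M$. The key observation is that a comonad $(G, \epsilon, \delta)$ on $\M$ carries, as part of its structure, a counit $\epsilon: G \to \id_\M$. This counit is precisely a \emph{copoint}, so $(G, \epsilon)$ is an object of $[\M,\M]\downarrow \id_\M$. Applying the functor $\Psi$ from the theorem then produces a costrength $\cst: G(M \times X) \to M \times G(X)$, given explicitly by $(\epsilon \times \id) \circ \langle G\pi_1, G\pi_2\rangle$, and satisfying the colax morphism axioms~\eqref{eq:colax-morphism}. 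Hence the underlying functor of $G$ is $\M$-costrong.

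To promote this to a statement about the comonad structure, I would then check that the counit $\epsilon$ and comultiplication $\delta$ are $\M$-costrong natural transformations with respect to this induced costrength. For $\epsilon: G \to \id_\M$, costrength-compatibility amounts to the square relating $\cst_G$ and the (trivial) costrength on $\id_\M$, which reduces via the explicit formula for $\cst$ and the naturality of $\epsilon$ to the triangle $\pi_1 \circ \langle \epsilon\pi_1, \epsilon\pi_2 \rangle = \epsilon\pi_1$, i.e.\ a projection identity. For $\delta: G \to G \circ G$, I would similarly expand both composite costrengths $\cst_{G \circ G}$ and $(\id \times \cst_G) \circ \cst_G \circ G(\cdots)$ using the $\Psi$-formula, and reduce the required commutativity to naturality of $\delta$ together with the comonad counit law $\epsilon_G \circ \delta = \id$. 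Both verifications are instances of the general fact, already invoked in the proof of Theorem~\ref{thm:costrong=copointed}, that every functor and every natural transformation on a cartesian category is canonically colax monoidal; in particular a routine diagram chase suffices.

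I expect no serious obstacle: the substance is entirely contained in Theorem~\ref{thm:costrong=copointed}, and the only thing to add is the elementary remark that a comonad's counit supplies the needed copoint, plus the bookkeeping that $\epsilon$ and $\delta$ respect the induced costrength. The mildest subtlety is simply being careful that the costrength one gets is the ``canonical'' one coming from the cartesian structure, so that the comonad operations are automatically compatible with it — but this is forced by the colax-monoidal-for-free principle and requires no real work. One could alternatively phrase the whole argument by noting that $\Psi$ is functorial and monoidal enough to carry comonoid objects in $([\M,\M], \circ, \id)$ lying over $\id_\M$ to comonoids in $(\mathsf{End}(\M)_\cst, \circ, \id)$, but the direct check is shorter for a single statement.
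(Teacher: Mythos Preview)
Your approach is correct and matches the paper's: the corollary is stated there without proof, as an immediate consequence of Theorem~\ref{thm:costrong=copointed}, and your observation that the counit $\epsilon:G\to\id_\M$ supplies the required copoint is exactly the intended deduction. Your additional bookkeeping that $\epsilon$ and $\delta$ are themselves $\M$-costrong natural transformations goes slightly beyond what the paper spells out, but is in the spirit of the surrounding discussion and is handled correctly.
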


\begin{corollary}\label{cor:cofree-cst}
For every endofunctor $F$ on a cartesian category $\M$, the {\em cofree} copointed functor over $F$, namely $\id_\M \times F$~\cite{LPW2000}, is costrong. 
\end{corollary}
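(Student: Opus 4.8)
The plan is to read this off directly from Theorem~\ref{thm:costrong=copointed}. Since $\M$ is cartesian, that theorem identifies costrong endofunctors with copointed ones, so it suffices to equip $\id_\M \times F$ with a copoint, that is, a natural transformation into $\id_\M$. The cofree copointed functor carries an evident one: the first projection $\pi_1 : \id_\M \times F \to \id_\M$ is natural with codomain $\id_\M$, so $(\id_\M \times F, \pi_1)$ is an object of $[\M,\M]\downarrow \id_\M$. Applying the functor $\Psi$ from the proof of Theorem~\ref{thm:costrong=copointed} then endows $\id_\M \times F$ with a costrength, which is exactly the claim.

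It is worth recording the resulting structure explicitly. Writing $G = \id_\M \times F$, the formula defining $\Psi$ (with copoint $\epsilon = \pi_1$) specialises to
\[
\cst : G(M \times X) \xrightarrow{\langle G\pi_1, G\pi_2\rangle} G(M) \times G(X) \xrightarrow{\pi_1 \times \id} M \times G(X) ,
\]
and unfolding $G(M \times X) = (M \times X) \times F(M \times X)$, this is the map sending $((m,x),t)$ to $(m,(x,F\pi_2\, t))$: the costrength retains the $M$-coordinate of the identity part verbatim, keeps $x$ inside $G(X)$, and replaces $F(M\times X)$ by $F(X)$ via $F\pi_2$, discarding $F\pi_1\, t$. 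In particular $\pi_2 \circ \cst = G\pi_2$, as Lemma~\ref{lem:costrong=copointed} demands.

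There is essentially nothing hard here; the only point needing a word is that $\id_\M \times F$ really is the cofree copointed functor over $F$, which is the adjunction recalled just before Theorem~\ref{thm:costrong=copointed} — the forgetful functor $[\M,\M]\downarrow \id_\M \to [\M,\M]$ being comonadic with right adjoint $F \mapsto \id_\M \times F$ — itself a standard fact about slice categories over a cartesian product. One could instead bypass $\Psi$ and verify the colax-morphism axioms~\eqref{eq:colax-morphism} for the displayed $\cst$ by hand, but this is already subsumed by the observation, used in the proof of Theorem~\ref{thm:costrong=copointed}, that every endofunctor on a cartesian category is canonically colax monoidal.
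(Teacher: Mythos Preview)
Your proposal is correct and is precisely the intended argument: the paper states the corollary without proof because it is an immediate application of Theorem~\ref{thm:costrong=copointed}, and you have spelled out exactly that application, equipping $\id_\M \times F$ with the copoint $\pi_1$ and transporting it along $\Psi$. The explicit unfolding of the resulting costrength is a welcome bonus and is computed correctly.
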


The above two corollaries show that there is a plethora of examples of costrong endofunctors on cartesian categories. In particular, the counit of a comonad produces a costrength, but a comonad may be costrong in a different way, see the case of the $\mathtt{Costate}$ comonad from Example~\ref{ex:cst-functors}.(\ref{ex:costrong-comonad-not-unique-cst}).

\begin{remark}
The one-to-one correspondence between $(\set,\times,\one)$-costrong endofunctors and copointed ones has the following intuitive explanation: 
In $\set$, or more generally in any cartesian category, having a functor $F$ endowed with a costrength $F(M\times X) \to M \times F X$ says that from an $F$-computation depending on both $M$ and $X$, the $M$-part can be extracted untouched, while retaining an $F$-computation on the $X$-part. 
In a general monoidal category, there is no canonical way to get $M$ out of $M \otimes X$, so a costrength is substantial extra {\em structure}.
However, in a cartesian category there is a canonical projection $\pi_1: M \times X \to M$. 
Therefore, once $F$ comes equipped with a copoint $\epsilon_{M \times X} : F(M\times X)\to M\times X$, we can automatically produce the $M$-component by composing with the projection. 
The $F X$-component is again obtained by projection, but this time {\em inside} the $F$-computation $F\pi_2:F(M\times X)\to F X$.
In conclusion, in the cartesian case, a costrength carries no more information than the canonical map $\epsilon_X:F X \to X$ -- that is, the ability to extract underlying values from any $F$-computation. 
The full costrength is then reconstructed from $\epsilon$ and the projections. 
However, that does not make copointed functors meaningless. In fact, these correspond to effects that are observational, context-like, or read-only.
\end{remark}

\begin{remark}
One of the reviewers asked whether the results of this section extend to costrong functors between {\em distinct} cartesian categories. Albeit this departs from our object of study (namely costrong {\em endo}functors), we shall however make a few comments on the subject.

First, we need an action of the {\em same monoidal category} on both cartesian categories. 
One way to achieve this is the following scenario: consider two cartesian categories $(\A,\times,\one),(\B,\times,\one)$ and a lax monoidal functor $J: \A\to \B$. 
Then $(\A,\times)$ acts upon itself by the regular action, and also on $\B$ using the functor $J$, by $M \cdot Y  = J(M)\times Y$. 
By repeating the arguments of this section, one can see that $(\A,\times)$-costrengths $F(M \times X) \to JM\times FX$ on a functor $F:\A\to \B$ are in one-to-one correspondence with natural transformations $F\to J$. 
We may call such functors {\em $J$-relative costrong}. In particular, $J$-relative comonads~\cite{AltenkirchChapmanUustalu-fossacs2010} are {\em $J$-relative costrong}.
Let now $J$ be the Yoneda embedding $\M \to [\M\op,\set]$ with $\M$ cartesian. 
It is well-known that $J$ preserves all limits (that exist in $\M$), in particular products. 
Then a functor $F:\M \to [\M\op,\set]$ is (the curried version of) a profunctor $\M \nto \M$. 
If $\M$ has arbitrary copowers, more generally arbitrary coproducts, then a $J$-relative costrength for $F$ is equivalent to an {\em evaluation}-like natural transformation from the $F(M)(N)$-copower of $N$ to $M$.
Examples include procomonads, such as the one induced by the adjunction $G_*\dashv G^*$, for any functor $G:\M\to \M$ (see the paragraph on profunctors in the next section). 

More generally, our arguments carry mutatis mutandis for a span of cartesian categories and lax monoidal functors $J:(\M,\times)\to (\A,\times)$, $K:(\M,\times)\to (\B,\times)$. The functors $J$ and $K$ determine (lax) $(\M,\times)$-actions on $\A$ and $\B$ as above, and a functor $F:\A \to \B$ is costrong with respect to these structures if and only if it comes equipped with a natural transformation $FJ\to K$:
\[
\xymatrix@C=60pt@R=15pt{
& 
\A
\ar[dd]^F
\\
\M 
\ar[ur]^-J
\ar[dr]_-K
\ar@{}[r]|{\Downarrow}
&
\\
&
\B 
}
\]

\end{remark}

\begin{example}
Consider a polynomial functor $F$ on $\set$, described by a set $S$ (of {\em shapes}, see~\cite{NiuSpivak}), and for each $s\in S$, by a family $\mathsf{ar}(s)$ (of {\em positions}). Explicitly, $FX = \coprod_{s\in S} X^{\mathsf{ar}(s)}$. 
Then a copoint of $F$, that is, a natural transformation $F \Rightarrow \id_\set$, is, by the Yoneda lemma, completely determined by specifying, for each $s\in S$, an element of $\mathsf{ar}(s)$. In particular, polynomial functors containing constants (i.e., $s\in S$ with $\mathsf{ar}(s)=\emptyset$) have no copoints, hence are not costrong. This is why the  $\mathtt{Maybe}$ or the $\mathtt{List}$ monads from functional programming cannot be $(\set,\times,\one)$-costrong. 
\end{example}

\begin{remark}
A careful inspection of the diagrams involved in the proofs of Lemma~\ref{lem:costrong=copointed} and of Theorem~\ref{thm:costrong=copointed} shows that the hypothesis that $\M$ is a cartesian category can be relaxed to a semicartesian category~\cite{semicartesian,Fritz2020,semicartesian2}. A semicartesian category is a monoidal category where the unit object is terminal. Equivalently, it can be described as a ``monoidal category with projections'' (see also Footnote~\footref{footnote:semicocartesian}). 
The arguments above carry unchanged in the semicartesian case, under the proviso that copointed endofunctors should be replaced by copointed colax monoidal endofunctors (where both the functor component and the ``copoint'' component are colax monoidal). 

\end{remark}

The equivalence between costrong functors and copointed ones in the cartesian case allows us to see that the category of costrong functors $\mact_\cst(\M,\M)$ supports several monoidal structures, as follows: 
First, it is strictly monoidal with functor composition. 
Next, by the results of the subsequent Section~\ref{sec:acts = algs}, $\mact_\cst(\M,\M)$ has coproducts;
Finally, for $\M=\set$, if $F$ and $G$ are costrong functors, hence copointed, then their Day convolution $F * G$ is again copointed, hence costrong. The copoint is obtained as the composite
\[
(F*G)X = \int^{Y,Z\in \set} [Y \times Z,X]\times FY \times GZ \, \to \int^{Y,Z\in \set} [Y \times Z,X]\times Y \times Z \cong X
\]

%======================================%
\subsection{Costrong functors, categorically}\label{sec:acts = algs}

In this section we develop the category-theoretic foundations of costrong functors. 
Readers primarily interested in applications may wish to proceed directly to the final section of the paper, which can be read independently of the details developed here.

%======================================%

\paragraph{Costrong functors as colax algebra morphisms.}
Some of the results below are direct consequences of the fact that $\M$-actegories are the pseudo-algebras for the pseudomonad $\M \times -$ on $\mathbf{Cat}$, while $\M$-(co)strong functors between $\M$-actegories are the corresponding (co)lax morphisms of algebras (and the same goes for (co)strong natural transformations).

\begin{proposition} 
Let $(\M, \otimes,I)$ be a braided (or symmetric) monoidal category. Consider $\M$ as an $\M$-actegory, with the regular action.

\begin{enumerate}
\item There is an adjunction between $\M$ and the category $\mact_\cst(\M,\M)$ of $\M$-costrong endofunctors on $\M$ and $\M$-costrong natural transformations:
\[
\xymatrix@C=60pt{
\M
\ar@<+1.1ex>[r]^-{M \mapsto M \otimes -} 
\ar@{}[r]|-{\top}
&
\mact_\cst(\M,\M)
\ar@<+1.1ex>[l]^-{F \mapsto F(I)}
}
\]

\item (\cite{HeunenKarvonen2015}) This adjunction lifts to an adjunction between the categories of comonoids in $\M$ and $\M$-costrong comonads on $\M$:
\[
\xymatrix@C=60pt{
\mathbf{Comonoids}(\M)
\ar@<+1.1ex>[r]^{C \mapsto C \otimes -} 
\ar@{}[r]|-{\top}
&
\mathbf{Comonads}(\M)_\cst
\ar@<+1.1ex>[l]^{\mathbb D \mapsto \mathbb D(I)}
}
\]
\end{enumerate}
\end{proposition}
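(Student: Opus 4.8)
The plan is to establish the adjunction in part (1) directly, and then upgrade it to part (2) by a standard ``lifting along forgetful functors'' argument, so that the bulk of the work goes into the endofunctor case and the comonad case follows formally.

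First I would make precise the two functors in part (1). The left adjoint $L\colon \M \to [\M,\M]_\cst$ sends an object $M$ to the endofunctor $M \otimes -$; by Example~\ref{ex:costrong}, the braiding supplies this functor with a canonical $\M$-costrength (indeed with a strength as well), and a morphism $f\colon M \to N$ induces the obvious $\M$-costrong natural transformation $f \otimes -$. The right adjoint $R\colon [\M,\M]_\cst \to \M$ sends $(F,\cst)$ to $F(I)$; on a costrong natural transformation $\alpha\colon F \to G$ it returns the component $\alpha_I\colon F(I) \to G(I)$. Both are manifestly functorial. Next I would exhibit the unit and counit. The counit at an object $M$ is just the coherence isomorphism $M \otimes I \cong M$, i.e.\ $RL(M) = M \otimes I \xrightarrow{\ \cong\ } M$. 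The unit at $(F,\cst)$ is the natural transformation $\eta_F\colon F \to F(I) \otimes -$ obtained as the composite
\[
\xymatrix@C=40pt{
F(X)
\ar[r]^-{F(\cong)}
&
F(I \otimes X)
\ar[r]^-{\cst}
&
I \otimes F(X)
\cong F(I) \otimes X ?
}
\]
— wait, that is not quite it; the correct recipe is $F(X) \xrightarrow{F(\lambda^{-1})} F(X \otimes I) \cong$ \dots; more cleanly, using the regular action $M \cdot X = M \otimes X$, one takes $F(X) \cong F(X \otimes I) \xrightarrow{\cst} X \otimes F(I) \xrightarrow{\text{braid}} F(I) \otimes X$. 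I would check that $\eta_F$ is a costrong natural transformation (this uses the hexagon/triangle axioms for the costrength~\eqref{eq:colax-morphism} together with the braiding coherence) and that it is natural in $(F,\cst)$.

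Then I would verify the two triangle identities, $R\eta \cdot \varepsilon R = \id$ and $\varepsilon L \cdot L\eta = \id$. The first is an identity of morphisms in $\M$ obtained by evaluating $\eta_F$ at $I$ and composing with the coherence isomorphism $F(I) \otimes I \cong F(I)$; it reduces, after unfolding, to the unit axiom for the costrength (the right-hand triangle in~\eqref{eq:colax-morphism}). The second is an identity of costrong natural transformations between endofunctors $M \otimes - \to M \otimes -$, which by the cogenerator-free nature of the argument one checks componentwise; it again comes down to coherence of the associator/unitor and the braiding. These are the routine ``standard diagram chasing'' calculations, analogous in spirit to those cited in the proof of Theorem~\ref{thm:costrong=copointed}, so I would not grind through them in full.

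For part (2) I would argue that the adjunction of part (1) lifts. A comonoid $(C, \delta\colon C \to C \otimes C, \varepsilon\colon C \to I)$ is precisely the data making $L(C) = C \otimes -$ into an $\M$-costrong comonad, with comultiplication $\delta \otimes -$ and counit $\varepsilon \otimes - \cong$ (coassociativity and counitality of the comonad are exactly the comonoid axioms, and all structure maps are costrong by Example~\ref{ex:costrong}); conversely, evaluating an $\M$-costrong comonad $\mathbb D = (D, \Delta, \epsilon)$ at $I$ yields a comonoid $D(I)$ with comultiplication $D(I) \xrightarrow{\Delta_I} DD(I) \xrightarrow{D(\eta) \text{ or } \cst\text{-based map}} D(I) \otimes D(I)$ and counit $\epsilon_I\colon D(I) \to I$ — here one uses the unit $\eta$ of part (1) to turn the ``double application'' $DD(I)$ into a tensor $D(I) \otimes D(I)$. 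The key point is that these two assignments are the restrictions of $L$ and $R$ along the forgetful functors $\mathbf{Comonoids}(\M) \to \M$ and $\mathbf{Comonads}(\M)_\cst \to [\M,\M]_\cst$, and that the unit $\eta$ and counit $\varepsilon$ of part (1) are compatible with the comonad/comonoid structure; hence they descend to an adjunction upstairs. I would cite~\cite{HeunenKarvonen2015} for this and mainly indicate why each piece of structure transports correctly.

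The main obstacle I anticipate is the careful bookkeeping of braiding and associator coherences in defining the unit $\eta_F$ and in verifying that it is itself $\M$-costrong: the costrength appears both ``inside'' (applied to $F$) and ``outside'' (the trivial costrength on $C \otimes -$), and making the hexagon in~\eqref{eq:colax-morphism} interact correctly with the braiding without a strictness assumption on $\M$ is the delicate part. Once $\eta$ is correctly pinned down, the triangle identities and the lift to comonoids are formal.
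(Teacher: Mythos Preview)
The paper does not supply a proof for this proposition; it is stated in the appendix and implicitly subsumed under the remark that $\M$-actegories are pseudo\-algebras for the pseudomonad $\M\times(-)$ on $\cat$ and $\M$-costrong functors are the colax algebra morphisms. So there is no proof to compare against, and your direct unit--counit argument is a reasonable way to fill the gap.

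That said, there is a genuine mix-up in your proposal: you have the direction of the adjunction reversed. The functor $M\mapsto M\otimes(-)$ is the \emph{right} adjoint and $F\mapsto F(I)$ is the \emph{left} adjoint. (This is the $\op$-dual of the familiar strong-functor situation, where $M\otimes(-)$ is left adjoint because the strength provides the map $F(I)\otimes X\to F(X)$; for costrong functors the structural map points the other way.) Concretely, the natural bijection is
\[
[\M,\M]_\cst(F,\,M\otimes(-))\;\cong\;\M(F(I),M),
\]
and your own construction witnesses exactly this: the transformation you call $\eta_F\colon F\to F(I)\otimes(-)$, built from $F(X)\cong F(X\otimes I)\xrightarrow{\cst}X\otimes F(I)\xrightarrow{\mathrm{braid}}F(I)\otimes X$, is the \emph{unit} of the adjunction $(-)(I)\dashv (-)\otimes(-)$, and the coherence isomorphism $M\otimes I\cong M$ is its \emph{counit}. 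Your labelling of these as counit and unit (and the triangle identities you wrote, which do not type-check as stated) should be swapped accordingly. Once the labels are corrected, the verifications you outline go through, and your lifting argument for part~(2) is sound.
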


\begin{proposition}[Doctrinal adjunction~\cite{Kelly-Doctrinal}]
Let $\M$ be a monoidal category and $\A,\B$ two $\M$-actegories related by an adjunction $F \dashv G:\B \to \A$. Then the following hold:

\begin{enumerate}

\item The left adjoint $F$ is $\M$-costrong if and only if the right adjoint $G$ is $\M$-strong.

\item The adjunction lifts to an adjunction in the $2$-category $\M$-$\act_{\cst}$ if and only if the strength of the right adjoint $G$ is a natural isomorphism. \end{enumerate}
\end{proposition}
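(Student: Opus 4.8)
The plan is to run Kelly's mate calculus in the setting recalled at the start of this appendix, where $\M$-actegories are pseudoalgebras for the pseudomonad $\M\times(-)$ on $\cat$ and $\M$-strong (resp.\ $\M$-costrong) functors are the lax (resp.\ colax) morphisms of algebras. Write $\eta:\id_\A \to GF$ and $\varepsilon:FG\to \id_\B$ for the unit and counit of $F\dashv G$, and for an object $M$ of $\M$ let $M\cdot(-)$ denote the action endofunctor, on $\A$ or on $\B$. The first step is the basic bijection: for fixed $M$, natural transformations $\cst:F(M\cdot -)\to M\cdot F(-)$ correspond bijectively to natural transformations $\st:M\cdot G(-)\to G(M\cdot -)$ by taking mates across $F\dashv G$ on both the domain and the codomain edge, explicitly
\[
\st \;=\; \Big(M\cdot G(-) \xrightarrow{\ \eta\ } GF(M\cdot G(-)) \xrightarrow{\ G\cst\ } G(M\cdot FG(-)) \xrightarrow{\ G(M\cdot \varepsilon)\ } G(M\cdot -)\Big),
\]
and dually $\cst$ in terms of $\st$. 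Since the mate correspondence is natural in the parameter edges, this bijection is natural in $M$, hence carries an action-compatible family $\cst$ to an action-compatible family $\st$ and conversely.

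For part~(1), I would observe that the two coherence diagrams of~\eqref{eq:colax-morphism} for $\cst$ are, respectively, the mates of the two coherence diagrams of~\eqref{eq:lax-morphism} for $\st$, so that one side holds if and only if the other does. With the identification above this is exactly the statement of doctrinal adjunction~\cite{Kelly-Doctrinal}: a colax algebra structure on the left adjoint $F$ and a lax algebra structure on the right adjoint $G$ are mates. A bare-hands verification rests on the single fact that the mate of a pasting composite of $2$-cells is the pasting composite of the mates; applied to the hexagon and the unit triangle of~\eqref{eq:colax-morphism} it produces precisely the hexagon and unit triangle of~\eqref{eq:lax-morphism}. The only delicate point is the bookkeeping of the coherence isomorphisms of the two actions (the unlabelled $\cong$'s), which interleave with $\eta$ and $\varepsilon$ when one expands the pasting composites; this is routine but is where a fully explicit argument spends almost all of its effort.

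For part~(2), I would first unpack ``the adjunction lifts to $\M$-$\act_{\cst}$'' as: $F$ and $G$ are both $\M$-costrong and $\eta,\varepsilon$ are $\M$-costrong natural transformations (the triangle identities are then automatic, since they already hold in $\cat$). By hypothesis $F$ is $\M$-costrong; by part~(1), $G$ carries the mate strength $\st$. If $\st$ is invertible, then $\st^{-1}$ is a costrength for $G$ — it satisfies~\eqref{eq:colax-morphism} because $\st$ satisfies~\eqref{eq:lax-morphism} — so $G$ is a pseudomorphism of actegories, and a direct check from the mate formulas shows $\eta$ and $\varepsilon$ to be $\M$-costrong with respect to the costrength on $F$ and $\st^{-1}$ on $G$; hence the adjunction lifts. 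Conversely, if it lifts, then $G$ carries some costrength $\cst_G$, and the costrongness of the counit $\varepsilon:FG\to\id_\B$ together with the triangle identities forces $\cst_G$ to be a two-sided inverse of $\st$ (equivalently: the lifted adjunction equips $G$ with the mate across $F\dashv G$ of the identity $2$-cell on $G$), so $\st$ is an isomorphism. I expect this converse to be the main obstacle: extracting invertibility of $\st$ from the costrongness of $\eta$ and $\varepsilon$ forces one to expand the composite costrengths of $FG$ and $GF$ and use \emph{both} triangle identities, and this is the one genuinely non-formal computation in the argument; everything else reduces to the mate calculus plus coherence.
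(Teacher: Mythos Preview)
Your proposal is correct and is precisely the argument the paper has in mind: the paper gives no proof of its own for this proposition, merely citing Kelly's doctrinal adjunction and relying on the identification (stated at the opening of the appendix) of $\M$-actegories with pseudoalgebras for $\M\times(-)$ and of $\M$-(co)strong functors with (co)lax algebra morphisms. Your write-up simply unfolds that citation via the mate calculus, which is exactly how Kelly's theorem is proved; so there is no genuine difference in approach, only in level of detail.
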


\begin{proposition}[Non-canonical isomorphism~\cite{LucatelliNunes2019}]
Let $(F,\cst)$ be a costrong functor between $\M$-actegories $\A$ and $\B$. If there exists an isomorphism $F(M\cdot X) \to M\cdot F X$, natural in both $M$ and $X$, then $\cst:F(M\cdot X) \to M\cdot F(X)$ is also a natural isomorphism. 
\end{proposition}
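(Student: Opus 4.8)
The plan is to read the statement as an instance of the ``non-canonical isomorphism'' phenomenon for colax morphisms of pseudoalgebras. As recalled at the start of this appendix, $\M$-actegories are the pseudoalgebras for the pseudomonad $\M \times -$ on $\cat$, and an $\M$-costrong functor $(F,\cst)\colon \A \to \B$ is precisely a colax morphism of pseudoalgebras: the costrength $\cst$ is the colax structure $2$-cell attached to $F$, filling the square whose two sides are the functors $(M,X)\mapsto F(M\cdot X)$ and $(M,X)\mapsto M\cdot F(X)$ from $\M\times\A$ to $\B$. The hypothesis supplies a \emph{second} filler $\theta_{M,X}\colon F(M\cdot X)\to M\cdot F(X)$ of the very same square which is invertible but a priori \emph{incoherent} (it need not satisfy~\eqref{eq:colax-morphism}). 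The goal is to show that the mere presence of such a $\theta$ already forces the \emph{coherent} filler $\cst$ to be invertible; this is exactly the $2$-categorical fact established, for this pseudomonad, in~\cite{LucatelliNunes2019}, which I would invoke, giving below the reduction that makes the role of $\theta$ and of the coherence axioms transparent.

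First I would reduce to a single invertibility. Put $\sigma_{M,X} := \cst_{M,X}\circ\theta_{M,X}^{-1}\colon M\cdot F(X)\to M\cdot F(X)$, a natural endomorphism of the functor $(M,X)\mapsto M\cdot F(X)$. Since $\theta$ is invertible, $\cst = \sigma\circ\theta$, so $\cst$ is a natural isomorphism as soon as $\sigma$ is; moreover a two-sided inverse of a natural transformation is automatically natural, so it suffices to invert $\sigma$ as a mere family of morphisms. The unit component is immediate: the triangle in~\eqref{eq:colax-morphism} identifies $\cst_{I,X}$ with the composite of the two unit constraints $F(I\cdot X)\cong F(X)\cong I\cdot F(X)$, so $\cst_{I,X}$, and hence $\sigma_{I,X}$, is invertible. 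Equivalently, whiskering $\cst$ with the ``unit section'' $X\mapsto(I,X)$ of $\M\times\A$ yields an isomorphism.

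The hard part is to propagate invertibility from $M = I$ to an arbitrary object $M$ of $\M$. Naturality alone does not achieve this: a general monoidal category carries no distinguished morphisms into or out of $I$, and a natural endomorphism of $M\mapsto M\cdot B$ need not be invertible — indeed the naive analogue of the statement fails for colax morphisms over an arbitrary $2$-monad (an oplax monoidal endofunctor of $\set$ such as $\mathbb N\times-$, with the oplax comparison built from the canonical comonoid structure on $\mathbb N$, admits an incoherent natural isomorphism $F(X\times Y)\cong FX\times FY$ while its oplax comparison map is not invertible). Nor does the pentagon of~\eqref{eq:colax-morphism} suffice by itself, since specialising it at the unit object only produces a tautology. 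What is required is a genuinely $2$-categorical argument exploiting the specific shape of the pseudomonad $\M\times-$ — its multiplication $2$-cell and, crucially, the \emph{invertibility} of every associativity and unit constraint of the \emph{target} actegory $\B$ — to manufacture a two-sided inverse of $\sigma_{M,X}$; this is precisely the step where the cited non-canonical-isomorphism lemma does the essential work, and it is the one I expect to be the main obstacle. Once $\sigma$ is inverted, $\cst = \sigma\circ\theta$ is the desired natural isomorphism.
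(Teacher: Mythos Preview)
The paper does not give its own proof of this proposition; it simply states the result and cites \cite{LucatelliNunes2019}. Your proposal follows the same route: after some helpful framing (the reduction to the endomorphism $\sigma=\cst\circ\theta^{-1}$, the observation that $\sigma_{I,X}$ is invertible by the unit axiom, and the $\mathbb N\times-$ counterexample showing the claim is not automatic for arbitrary $2$-monads), you explicitly hand off the essential step to the cited non-canonical-isomorphism lemma. So your approach is essentially the same as the paper's --- invoke \cite{LucatelliNunes2019} --- only with more exposition around the citation; note that your preliminary reduction, while correct, is not actually used once you appeal to the cited result.
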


\paragraph{Colimits of costrong functors.} 
Let $\A$ be a cocomplete $\M$-actegory. Then $[\A,\A]$ is again cocomplete, with colimits computed pointwise~\cite[Sec.~3.3]{Kelly-book}. 

\begin{remark}\label{rem:colim-cst}
It is easy to see that the forgetful functor $\mact_\cst(\A,\A) \to [\A,\A]$ creates colimits. That is, a colimit of costrong functors is again costrong, with costrength induced by the universality of the colimit. 
In particular, coproducts of $\M$-costrong functors are again costrong, and the coproduct injections are costrong natural transformations. 
Once again, category theory provides us with ``Theorems for free'', as Philip Wadler famously put it. 
Compare the above result to \cite[Lemma 3.5]{JaskelioffRypacek2012} where instead a direct proof is provided to show that traversable functors (that is, the costrong functors with respect to the action of applicative functors on $\set$ by evaluation) are closed under arbitrary coproducts. 
\end{remark}

%======================================%

\paragraph{Free monads inherit costrength.}
Given a functor $F:\A \to \A$, recall that a monad $T$ on $\A$ is (algebraically) free on $F$ if the forgetful functor from the category of $F$-algebras $\mathbf{Alg}(F) \to \A$ is right adjoint, and the monad it generates is precisely $T$~\cite{Kelly1980}. %section 22 
In~\cite{McDermottUustalu2022}, it is shown that the strength of a functor does not necessarily pass to the associated algebraically free monad (if it exists). 
On the contrary, algebraically free monads inherit costrength from their generating functors, as we show next.

\begin{proposition}\label{prop:cst-free-mnd}
Let $\A$ be a locally presentable $\M$-actegory and let $F:\A\to \A$ be an accessible $\M$-costrong functor. Then the free monad $T_F$ on $F$ exists, is an $\M$-costrong monad and the universal natural transformation $F\to T_F$ is costrong. 
\end{proposition}

\begin{proof}
The existence of the free monad on $F$ under the given hypotheses is well known. 
In particular, it  can be obtained as $T_{F}X = \mu Y.X+FY$. 
It remains to show that it is an $\M$-costrong monad.
The $\M$-costrength of $T_F$, denoted $\cst^{T_F}$, is obtained as the unique morphism from the initial algebra in the diagram below. 
\[
\xymatrix@C=38pt{
M \cdot X+ FT_F(M \cdot X)
\ar[d]_{\id + \cst^T}
\ar[rrr]^{\mathsf{in}}
& 
&
&
T_F(M \cdot X)
\ar@{.>}[d]^{\cst^T}
\\
M \cdot X+ F(M \cdot T_F X)
\ar[r]^{\id+\cst}
& 
M \cdot X + M \cdot F T_F X 
\ar[r]^-{[\id\cdot \mathsf{inl}, \id\cdot \mathsf{inr}]}
& 
M \cdot (X+F T_F X)
\ar[r]^-{\id\cdot \mathsf{in}}
& 
M \cdot T_F X
}
\]
In the diagram above, the top arrow $\mathsf{in}$ is the initial algebra isomorphism, while $\mathsf{inl}$, $\mathsf{inr}$ are the coproduct injections.  
Standard but tedious verifications show that $\cst^{T_F}$ is natural both in $M$ and in $X$, that $(T_F,\cst^T)$ is indeed an $\M$-costrong monad, and that the natural transformation $F \to T_F$ asserting the freeness of $T_F$ over $F$ is also costrong, being a composite of such. % 
\end{proof}

\paragraph{Costrong functors and liftings to (co)algebras.}

Let again $\A$ be an $\M$-actegory. 
Then $\M$-costrengths on a functor $F:\A \to \A$ are in one-to-one correspondences with liftings of the $\M$-action from $\A$ to the category of $F$-algebras $\mathbf{Alg}(F)$.\footnote{Actually, this holds even in the case of an $\M$-graded monad, not necessarily an action. } 
This is analogue to the lifting of the monoidal structure to the Eilenberg-Moore category of algebras for a comonoidal monad~\cite{Moerdijk2002}, or to the lifting of an $\M$-action to the Eilenberg-Moore category of algebras of an $\M$-monad~\cite{Skoda2004}. 
There is also a dual perspective: for every object $M$ of the monoidal category $\M$ there is an endofunctor $M\cdot -$ on every $\M$-actegory. Then an $\M$-costrength for a functor $F:\A \to \B$ induces a lifting of $F$ between the categories of coalgebras for $M \cdot -$, that is, between monoidal stream $M$-automata. Moreover, this lifting is uniform in $M$. More details will be discussed in Section~\ref{sec:costrong-stream}, specifically regarding the case of the cartesian category of sets acting on itself. 

%%===============================%

\paragraph{(Co)Strong functors induce strong profunctors.} 

Let $\A,\B$ be $\M$-actegories. 
Recall briefly that a profunctor $P:\A \nto \B$ is a functor $P:\B\op \times \A \to \set$. A morphism of profunctors is a natural transformation. Profunctors $A\nto \B$ and their morphisms organise themselves into a category $\mathbf{Prof}(\A,\B)$. 
Any functor $F:\A \to \B$ induces a pair of adjoint profunctors, $F_*:\A \nto \B$, $F^*:\B \nto \A$ with $F_* \dashv F^*$, given by $F_*(Y,X) = \B(Y,F(X))$ and $F^*(X,Y)= \B(F(X),Y)$. 
In particular, the assignment $F \mapsto F^*$ determines a functor $[\A,\B]\to \mathsf{Prof}(\B,\A)\op$ that we shall later need.  
If $\A$ and $\B$ are $\M$-actegories, a profunctor $P:\A \nto \B$ is called $\M$-strong (or a Tambara module) if it is endowed with a family of maps $\st: P(X,Y) \to P(M\cdot X, M \cdot Y)$, natural in $X$ and $Y$ and dinatural in $M$, satisfying the usual coherence laws~\cite{PastroStreet2008,JaskelioffRivas2017}. Strong profunctors have been successfully employed in modelling equilibria and strategies of open games~\cite{AGGKLNF2021}. Hughes' arrows from functional programming are monoids in the monoidal category of strong profunctors (with composition as the tensor product).

\begin{proposition}\label{prop:(c)st-functor-to-st-profunctor}
Let $\A,\B$ be $\M$-actegories and $F:\A\to \B$ be an arbitrary functor. Then $F$ is $\M$-strong if and only if $F_*$ is an $\M$-strong profunctor. Dually, $F$ is an $\M$-costrong profunctor if and only if $F^*$ is an $\M$-strong profunctor.\footnote{Notice that the variance of strengths for profunctors does not change, both $F^*$ and $F_*$ being $\M$-strong.}
\end{proposition}

\begin{proof}
As our interest lies primarily in $\M$-costrong functors, we shall present the argument for these, the dual case being similar.
We shall use the following observation, easily obtained by the Yoneda lemma:
given functors $S:\A \to \A$, $F:\A\to\B$, $T:\B\to \B$, natural transformations $FS \to TF$ as in the diagram below
\[
\xymatrix{\A \ar[r]^-S \ar[d]_F \ar@{}[dr]|{\Leftarrow}
  &
  \A \ar[d]^F
  \\
  \B \ar[r]^-T 
  &
  \B 
}
\]
are in bijection with natural transformations between profunctors $\B(F-,=) \to \B(FS-,T=)$.

Hence to give a natural transformation $F(M \cdot - ) \to M\cdot F(-)$ is the same as to give $F^*(-,=)\to  F^*(M\cdot(-),M\cdot(=))$.
For reader's sake, we spell out below the correspondences:
If $F$ is $\M$-costrong, the strength for the right adjoint profunctor $F^*$ is given by
\[
\xymatrix{F^*(X,Y) = \B(FX,Y) \ar[r] & \B(M \cdot FX, M \cdot Y) \ar[r]^-{\B(\cst,\id)} & \B(F(M\cdot X),M \cdot Y)=F^*(M \cdot X,M\cdot Y)
}
\]
where the unlabelled arrow represents the action of the functor $M \cdot -$ on arrows. Conversely, the $\M$-strength of the profunctor $F^*$ determines a costrength on $F$ as the image of the identity via the composite
\[
\xymatrix{\B(FX,FX) = F^* (X,FX)\ar[r]^-\st & F^* (M \cdot X, M \cdot FX) = \B(F(M \cdot X),M \cdot FX)}
\]
Straightforward verifications show now that these correspondences indeed produce the desired (co)strengths.
\end{proof}

Our interest in profunctors comes from the fact that universal %(co)free 
$\M$-strong profunctors are easy to find. 
The following result goes back to~\cite{PastroStreet2008}, but see also~\cite{JaskelioffRivas2017}:

\begin{proposition}
The forgetful functor from the category of $\M$-strong profunctors and strong natural transformations 
\[
\xymatrix@!{
\mathbf{Prof}(\A,\A)_\st \ar[r] & \mathbf{Prof}(\A,\A)}
\]
has both a left and a right adjoint, given by 
\[
\mathsf{Free}_{\st}(P)(Y,X) = \int^{M,X',Y'} \A(Y,M\cdot Y') \times \A(M \cdot X',X) \times P(Y,X)
\]
respectively 
\[
\mathsf{Cofree}_{\st}(P)(Y,X) = \int_M P(M\cdot Y,M\cdot X)
\]  
whenever the coend and the end above exist (which is the case, for example, whenever $\M$ is small and $\A$ is (co)complete). 
\end{proposition}

In view of the result above and of Proposition~\ref{prop:(c)st-functor-to-st-profunctor}, it is natural to ask about universal constructions of costrong functors. 
More precisely, whether the forgetful functor from the category of costrong endofunctors $\mact_\cst(\A,\A)$ over an $\M$-actegory $\A$ to the category of all endofunctors $[\A,\A]$ has a left or right adjoint. In view of Remark~\ref{rem:colim-cst} and Proposition~\ref{prop:cst-free-mnd}, we would rather expect a right adjoint (that is, a method of producing a {\em cofree} costrong functor from a given functor). 

Consider now the diagram below, commutative with respect to the horizontal forgetful functors: 
\[
\xymatrix@C=40pt{
[\A,\A]_{\cst} 
\ar[d]_{(-)^*} 
\ar[r] 
& 
[\A,\A] 
\ar[d]^{(-)^*} 
\\ 
\mathbf{Prof}(\A,\A)\op_\st 
\ar[r]^\perp_\perp 
\ar@{<-}@/^2.5ex/[r]^-{\mathsf{Free}_{\st}}
\ar@{<-}@/_2.5ex/[r]_-{\mathsf{Cofree}_{\st}}
& 
\mathbf{Prof}(\A,\A)\op  
}
\]
The bottom forgetful functor is both left and right adjoint by the previous proposition. It is still under investigation under which conditions at least one of the bottom adjunctions lift to the top; that is, when {\em (co)free} costrong functors exist and how to construct them.%
\footnote{Compare with~\cite{JaskelioffRivas2017}, where {\em free} applicative functors (endofunctors on $\set$, with compatible strong and lax monoidal structure) were obtained. In fact, freeness in {\em op. cit.} referred to the lax monoidal structure of the respective functor, not to the strength which {\em always} exists.} %
The well-known results on lifting adjoints~\cite{Dubuc1968,Johnstone1975} do not seem to apply directly here.

%======================================%

\section{Applications}\label{sec:applications}

%======================================%

\subsection{Costrong functors and (mixed) optics}\label{sec:optics}

Optics are bidirectional data accessors, offering a powerful abstraction for accessing data in a compositional manner. 
For a programmer's perspective, one can intuitively think of an optic as decomposing a type or data structure through a monoidal action, enabling access to a nested structure. 
The result of the decomposition is a context or residual, which can later recompose with other data to get a composite object. 
We are considering here mixed optics for a pair of actions of a monoidal category $\M$; these work independently from one another and only interact through the residuals. 

\medskip

While the existing literature focused on changing optics via a single monoidal functor preserving the tensor product up to an isomorphism and acting on all components, here we allow pairs of $\M$-morphisms of mixed strength, acting individually on each $\M$-action of the optic rather than simultaneously on both. 

\medskip

A careful inspection in the definition of optics (to be recalled below) will convince the reader that the two actions can in fact be relaxed to a graded monad and a graded comonad, connected via the monoidal category $\mathcal{M}$ that provides the grading.\footnote{More generally, via a mixed {\em graded} distributive law~\cite{faul}.}  
We leave the exploration of optics arising from a graded monad-comonad pair for future research, and focus now on genuine $\M$-actions. 

\medskip

Let $\M$ be a monoidal category acting on two categories $\A$ and $\B$.
The category of (mixed) optics $\optic_{\A,\B}$\footnote{Strictly speaking, optics depend not on the {\em underlying categories} $\A$, $\B$ but on their {\em $\M$-actions}.} has as objects pairs of objects of $\A$ and $\B$, respectively, and hom-sets
\[
\optic_{\A,\B}((X',Y'),(X,Y)) = \int^{M\in \M} \A(X',M\cdot X)\times \B(M\cdot Y,Y')
\]

We refer to~\cite{CEGLMPR2020} for more details and a plethora of examples; these include cases where $\A,\B$ are not necessarily small, but the above coend still exists. There is an identity-on-objects functor $\A\op \otimes \B \to \optic_{\A,\B} $ exhibiting $\optic_{\A,\B}$ as the Kleisli category of a monad in the bicategory of profunctors~\cite{PastroStreet2008}.

\medskip

Observe that optics are in fact {\em parametric} in the two $\M$-actions, in the sense that pairs of $\M$-costrong/$\M$-strong functors act as {\em optics transformers}. 
More in detail, an $\M$-costrong functor between $\M$-actegories $F:\A \to \A'$ and an $\M$-strong one $G:\B \to \B'$ induce  
a functor between categories of optics $\optic(F,G):\optic_{\A,\B}\to \optic_{\A', \B'}$ as follows: on objects, $F$ and $G$ act component-wise by $(X,Y) \mapsto (FX,GY)$. The action on morphisms, that is, on optics, is globally induced by the composite 
\[
\xymatrix@C=8pt{
{\A}(X',{M}{\cdot} X){\otimes} {\B}({M}{\cdot} Y,Y') 
\ar[r]
&
{\A}'(F X',F({M}{\cdot} X)){\otimes} {\B}'(G({M}{\cdot} Y),G Y') 
\ar[r]
&
\A'(F X',{M}{\cdot} F X){\otimes} \B'({M}{\cdot} G Y,G Y') 
}
\]
The correspondence $(\A,\B)\mapsto \optic_{\A,\B}$ determines now a $2$-functor 
    \[
    \optic: \mact_\cst \times \mact_\st \to \cat
    \]
    from the product of the $2$-categories of $\M$-actions, $\M$-costrong functors and $\M$-costrong natural transformations $\mact_\cst$, respectively $\M$-actions, strong functors and strong natural transformations $\mact_\st$ into $\cat$.

%======================================%

\subsection{Costrong functors and streams}\label{sec:costrong-stream}%

In the sequel, we shall consider $(\set,\times, \one)$ with the cartesian closed structure acting upon itself.
Let $F:\set \to \set$ a $(\set,\times, \one)$-costrong functor.
As explained in Section~\ref{sec:acts = algs}, we can interpret the costrength of $F$ as a distributive law of $F$ over the {\tt Writer} functor $T_M X = M \times X$, {\em uniform} in the parameter $M$. That is, a stream GSOS specifications~\cite{Klin2011}.
This distributive law allows $F$ to lift to $T_M$-coalgebras. 
Recall that a $T_M$-coalgebra is a set $X$, endowed with a map $\langle {\mathtt{out}}, {\mathtt{next}}\rangle: X \to M\times X$, that is, a stream automata. 
The final $T_M$-coalgebra is the set of streams $M^\omega$, with coalgebra structure given by the usual pairing of maps $\langle {\mathtt{head}}, {\mathtt{tail}}\rangle: M^\omega \to M\times M^\omega$. 
The unique map from the $T_M$-coalgebra $X$ into $M^\omega$ sends a state $x\in X$ to the stream of outputs originated in that state: 
\[
\llbracket x
\rrbracket  = (\mathtt {out}(x), \mathtt{out}(\mathtt {next}(x), \mathtt{out}(\mathtt {next}(\mathtt{next}(x)), \ldots )
\] 
Let us now examine the effect of applying an arbitrary functor $F$ to the coalgebra. 
Intuitively, one can think of $F$ as introducing a context that encapsulates the transition system. 
While the internal dynamics persist, it becomes hidden from an external observer due to this contextual wrapping. 
Generally, there is no way of extracting the outputs from within the context, unless $F$ is $\M$-costrong:  
\[
\xymatrix@C=50pt{
F X 
\ar[r]^-{F\langle {\mathtt{out}}, {\mathtt{next}}\rangle}
&
F(M \times X)
\ar[r]^{\cst}
&
M \times F X
}
\]
Hence now the stream of $M$-outputs is preserved exactly as it was originally, remaining accessible, although the transition to the next state happens inside the context. 
This behaviour is formally captured by the final coalgebra, as follows:
\[
\xymatrix@C=50pt{
X 
\ar[r]^-{%
\langle {\mathtt{out}}, {\mathtt{next}}\rangle
} 
\ar[d]_-{\llbracket -
%{\mathtt{out}}, {\mathtt{next}} 
\rrbracket }
& 
M \times X
\ar[d]^{\id \times \llbracket - 
%{\mathtt{out}}, {\mathtt{next}} 
\rrbracket }
&
\phantom{aa}F X \phantom{i}
\ar@{}[l]^(.25){}="a"
\ar[r]^-{F\langle {\mathtt{out}}, {\mathtt{next}}\rangle}
\ar[d]_-{F\llbracket -
%\langle {\mathtt{out}}, {\mathtt{next}}\rangle 
\rrbracket}
&
F(M \times X)
\ar[r]^{\cst}
\ar[d]
&
M \times F X
\ar[d]^-{\id \times \llbracket -%\cst \, F\langle {\mathtt{out}}, {\mathtt{next}}\rangle 
\rrbracket}
\\
M^\omega 
\ar[r]^-{%
\langle {\mathtt{head}}, {\mathtt{tail}}\rangle
} 
& 
M \times M^\omega
&
F M^\omega 
\ar[r]^-{%
F\langle {\mathtt{head}}, {\mathtt{tail}}\rangle
} 
\ar[d]_-{\llbracket -
%\cst \, F\langle \mathtt{head},\mathtt{tail}\rangle 
\rrbracket}
&
F(M \times M^\omega)
\ar[r]^\cst
&
M \times F M^\omega
\ar[d]
\\
&&
M^\omega 
\ar@{}[l]^(.25){}="b" 
\ar@<+0.75ex>@{<-}`l[uu]`[uu][uu] %"b";"a"
%\ar@{<-}`l[uu]`[uu][uu]
\ar[rr]^-{%
\langle {\mathtt{head}}, {\mathtt{tail}}\rangle
} 
& &
M \times M^\omega
}
\]

Another well-known coalgebraic principle captured using  $(\set, \times, \one)$-costrong functors is {\em coinduction up-to}: 
First, observe that the costrength of $F$ induces in particular an $F$-algebra structure on the final $T_M$-coalgebra of streams, denoted above by $\llbracket-\rrbracket: F(M^\omega) \to M^\omega$. 
Next, given a function $f: X \to M^\omega$, it is often the case that there is no $T_M$-coalgebra structure on $X$ such that the associated behaviour map to be precisely $f$. 
Coinduction up-to remedies in many cases this issue: the  $F$-algebra structure on streams guarantees, for any $M \times F(-)$-coalgebra $\phi:X \to M \times F X$, a unique arrow $\tilde \phi: X \to M^\omega$ making the diagram below commute~\cite{Bartels2003}:  
\[
\xymatrix@R=18pt@C=50pt{
&
& 
F M^\omega
\ar[d]^{\llbracket - \rrbracket}
\\
X 
\ar@{.>}[rr]^{\exists ! \tilde \phi}
\ar[d]_{\phi}
&
&
M^\omega 
\ar[d]^{\langle {\mathtt{head}}, {\mathtt{tail}}\rangle}
\\
M \times F X
\ar[r]^{\id \times F(\tilde \phi)}
&
M \times F M^\omega
\ar[r]^{\id \times \llbracket - \rrbracket}
&
M \times M^\omega
}
\]
If the original $f:X \to M^\omega$ happens to satisfy the diagram, then by uniqueness, there is a coinductive definition for $f$ (up-to $\llbracket-\rrbracket$).

%======================================%

\section{Conclusions}

The authors' interest in distributive laws over graded monads/costrong functors with respect to actions of monoidal categories emerged when studying transformations of profunctor optics as pairs of strong/costrong functors between actegories. 
By dualising the well-established notion of strong functors, we observed that costrength naturally arises in a variety of categorical constructions, particularly when considering coalgebraic structures and context interaction. 
Our results point that costrength can be fruitfully understood through the lens of graded distributive laws, and that it provides a principled way to mediate between monoidal actions and coalgebraic behaviour specifications, beyond abstract formalism.

\medskip

The reader may ask why we considered only graded distributive laws which are in one-to-one correspondence with Kleisli liftings, and not their Eilenberg--Moore counterparts.
After all, there is also a well-established topic of (ungraded) distributive laws versus Eilenberg-Moore liftings, which generalises without difficulty to the graded setting.
One reason is that we did not want to divert reader's attention from costrong functors. 
Another reason is that when the graded monad corresponds to the regular of a monoidal category upon itself, graded Eilenberg-Moore distributive laws produce strong functors, which are already well-covered in the literature.
However, in the general setting of graded monads or actegories, the study of graded Eilenberg–Moore distributive laws remains an interesting topic that we plan to pursue in future work.
 
\medskip

There are many other directions that remain open for further exploration, like the development of a comprehensive theory of costrong functors, the study of costrength/graded distributive laws in enriched and higher-categorical contexts, potential applications in effect-coeffect systems and compositional semantics, not to mention the connections with various systems of graded modal or graded linear logic.

\medskip

As explained in Remark~\ref{rem:kl-bicat}.\ref{rem:kl-bicat2}, the Kleisli construction of a graded monad introduced in~\cite{FujiiKatsumataMelies-FoSSaCS2016} is inherently bicategorical in nature, and it is this the proper context in which the results of this paper should be seen. 
On a different perspective, graded monads generalize monads, and so should do their Kleisli construction. 
On one hand, in~\cite{FujiiKatsumataMelies-FoSSaCS2016}, this is exhibited as a genuine Kleisli construction for a genuine monad {\em in} a $2$-category. 
On another hand, the Kleisli category of a monad is the {\em lax colimit} of the monad seen as a lax functor. 
As graded monads generalise monads, it is expected the same to happen in the graded case. 
However, the lax colimit, much resembling Grothendieck's construction, is not the one described in~\cite{FujiiKatsumataMelies-FoSSaCS2016}. 
It is the {\sf coPara} construction of~\cite{CGHR-copara}. 
While in the latter the grades only decorate morphisms (as {\em co}parameters), the Fujii-Katsumata-Melli\`es construction consider grades on both object-level, as well as morphism-level. 
The precise connection between these two Kleisli constructions is still under investigation.

%======================================%
\section*{Acknowledgments} We thank the anonymous reviewers for their valuable suggestions and insightful comments.

%======================================%

\bibliographystyle{alpha}

\begin{thebibliography}{100}
%======================================%
%Type = Inproceedings
\bibitem{AltenkirchChapmanUustalu-fossacs2010}
	{Th. Altenkirch}, 
	{J. Chapman},
	{T. Uustalu}.
\newblock 
	{Monads need not be endofunctors}. 
	In:
	%{L. Ong (ed.)},
	{Foundations of Software Science and Computational Structures. FoSSaCS 2010},
	{Lecture Notes in Computer Science, vol. 6014},
	{Springer, Berlin, Heidelberg},
	{2010}, 
  	pp. 
	{297--311}.
\newblock 
	\url{https://doi.org/10.1007/978-3-642-12032-9_21}.
%======================================%
%Type = Inproceedings
\bibitem{AGGKLNF2021}
	{R. Atkey}, 
	{B. Gavranovi{\'c}},
	{N. Ghani}, 
	{C. Kupke},
	{J. Ledent}, 
	{F. Nordvall~Forsberg}.
\newblock 
	{Compositional Game Theory, Compositionally}. 
	In: 
	%{D. Spivak and J. Vicary (eds.)},
	{3rd International Applied Category Theory Conference. ACT 2020}, 
	{Electronic Proceedings in Theoretical Computer Science}
	{333},
	{Open Publishing Association},
	{2021},
	pp.
	{198--214}.
\newblock 
	\url{https://doi.org/10.4204/EPTCS.333.14}.
%======================================%
%Type = Article
\bibitem{BKV2019}
	{A. Balan}, 
	{A. Kurz},
	{J. Velebil}.
\newblock 
	{Extending set functors to generalised metric spaces}.
\newblock 
	{Logical Methods in Computer Science}
	{15} 
	{(2019)}
	{5:1--5:57}.
\newblock 
\url{https://doi.org/10.23638/LMCS-15(1:5)2019}.
%======================================%
%Type = Inproceedings
\bibitem{BalanPantelimon2025}
	{A. Balan}, 
	{S.~G. Pantelimon}.
\newblock 
	{The Hidden Strength of Costrong Functors}.
	In:
	%{A. Arusoaie, H. Cheval and R. Iosif (eds.)},
	{Proceedings of the 9th  Working Formal Methods Symposium. FROM 2025}, 
	{Electronic Proceedings in Theoretical Computer Science}
	{427},
	{Open Publishing Association},
	{2025},
	pp. 
	{141-154}.
\newblock 
	\url{https://doi.org/10.4204/EPTCS.427.10}.
%======================================%  
%Type = Inproceedings
\bibitem{Barr1970}
	{M. Barr}.
\newblock 
	{Relational algebras}.
	In: 
	%{S. MacLane et al. (eds.)},
	{Reports of the Midwest Category Seminar IV}, 
	{Lecture Notes in Mathematics}
	{137},
	{Springer, Berlin,   Heidelberg},
	{1970},
	pp. 
	{39--55}.
\newblock 
	\url{https://doi.org/10.1007/BFb0060439}.
%======================================%
%Type = Article
\bibitem{Bartels2003}
    {F. Bartels}. 
\newblock 
    {Generalised coinduction}.
\newblock 
    {Mathematical Structures in Computer Science}
    {13}
    {(2003)} 
    {321--348}.
\newblock 
    \url{https://doi.org/10.1017/S0960129502003900}.
%======================================%
%Type = Inproceedings
\bibitem{Benabou1967}
	{J. B{\'e}nabou}.
\newblock {Introduction to bicategories}, in:
  {Reports of the Midwest Category Seminar},
  {Lecture Notes in Mathematics} {47},
  {Springer Berlin, Heidelberg}. 
  , {1967}
  pp. {1--77}.
\newblock \url{https://doi.org/10.1007/BFb0074299}.
%======================================%
%Type = Article
\bibitem{BluteCockettSeely1997}
{R. Blute}, 
{J. Cockett},
  {R. Seely, R.}.
\newblock {Categories for computation in context and unified logic}.
\newblock {Journal of Pure and Applied Algebra} 
{116}
 {(1997)}
 {49--98}.
\newblock \url{https://doi.org/10.1016/S0022-4049(96)00162-4}.
%======================================%
%Type = Misc
\bibitem{semicartesian}
{nCategory Caf\'e}.
\newblock {Monoidal categories with projections}.
\newblock
{\url{https://golem.ph.utexas.edu/category/2016/08/monoidal_categories_with_proje.html}}.
\newblock {Last updated: 20 September
  2016}, {last accessed 31 August 2025}.
%======================================%
%Type = Misc
\bibitem{CapucciGavranovic2022}
{M. Capucci}, {B. Gavranovi{\'c}}.
\newblock {Actegories for the working amthematician},
 {2024}.
 {Available at \href{http://arxiv.org/abs/2203.16351}{{\tt arXiv:2203.16351}}}.	 
\newblock \url{https://doi.org/10.48550/arXiv.2203.16351}.
%======================================%  
%Type = Inproceedings
\bibitem{CGHR-copara}
{M. Capucci}, {B. Gavranovi\'c},
  {J. Hedges}, {E. F. Rischel}.
\newblock {Towards foundations of categorical cybernetics}.
In:
  %{D. Spivak and J. Vicary (eds.)},
	{3rd Applied Category Theory Conference. ACT 2020}, 
	{Electronic Proceedings in Theoretical Computer Science}
	{333},
	{Open Publishing Association},
	{2021},
	pp.
  {235--248}.
\newblock \url{https://doi.org/10.4204/EPTCS.372.17}.
%======================================%
%Type = Inproceedings
	\bibitem{ChoudhuryGay2025}
	{V. Choudhury}, 
	{S.~J. Gay}.
\newblock 
	{The Duality of $\lambda$-Abstraction}. 
	In:
	{52nd ACM SIGPLAN
Symposium on Programming Languages. POPL 2025},
	{volume 9, issue POPL},
	{ACM, New York},
	{2025}, 
	{pp. 332--361}.
\newblock 
	\url{https://doi.org/10.1145/3704848}.
%======================================%
%Type = Article
\bibitem{CEGLMPR2020}
{B. Clarke}, 
{D. Elkins},
  {J. Gibbons}, 
  {F. Loregian},
  {B. Milewski}, 
  {E. Pillmore},
  {M. Rom{\'a}n}.
\newblock {Profunctor optics, a categorical update}.
\newblock {Compositionality} {6}
{(2024)}
  {1--39}.
\newblock \url{https://doi.org/10.32408/compositionality-6-1}.
%======================================%
%Type = Inproceedings
\bibitem{Dubuc1968}
	{E. Dubuc}.
\newblock 
	{Adjoint triangles}.
	In: 
	%{S. MacLane (ed.)},
	{Reports of the Midwest Category Seminar II}, 
	{Lecture Notes in Mathematics}
	{61},
	{Springer, Berlin,   Heidelberg},
	{1970},
	pp. 
	{69--91}.
\newblock 
	\url{https://doi.org/10.1007/BFb0077118}.
%======================================%
%Type = Article
\bibitem{faul}
{P. F. Faul}, {G. Manuell},
  {J. Siqueira}.
\newblock {2-dimensional bifunctor theorems and distributive
  laws}.
\newblock {Theory and Applications of Categories}
  {37}
  {(2021)}
  {1149--1175}.
\newblock \url{https://doi.org/10.70930/tac/oc6l1ayp}.
%======================================%
%Type = Inproceedings
\bibitem{FiorePlotkinTuri1999}
{M. Fiore}, 
{G. Plotkin},
  {D. Turi}.
  \newblock {Abstract syntax and variable binding}.
  In:
  {14th Symposium on Logic in Computer Science LICS 1999}, 
  {IEEE}, {1999},
pp.
  {193--202}.
\newblock \url{https://doi.org/10.1109/LICS.1999.782615}.
%======================================%
%Type = Book
\bibitem{Fokkinga1994}
{M.~M. Fokkinga}.
\newblock {Monadic maps and folds for arbitrary datatypes}.
\newblock 
{Memoranda Informatica}
  {94-28}, 
  {1994}.
 {University of Twente}.
\newblock 
  \url{https://research.utwente.nl/en/publications/monadic-maps-and-folds-for-arbitrary-datatypes/}.
%======================================%
%Type = Article
\bibitem{Fritz2020}
{T. Fritz}, 
{E. F. Rischel}.
  \newblock {Infinite products and zero-one laws in categorical probability}.
\newblock {Compositionality} {2}
{(2020)}
	{1--20}.
\newblock \url{https://doi.org/10.32408/compositionality-2-3}.
%======================================%
%Type = Inproceedings
\bibitem{FujiiKatsumataMelies-FoSSaCS2016}
{S. Fujii}, {S.~y. Katsumata},
  {P.~A. Melli{\`e}s}.
\newblock {Towards a formal theory of graded monads}.
In:
%{Jacobs, B., Löding, C. (eds)}
  {Foundations of Software Science and Computation Structures FoSSaCS 2016}, 
  {Lecture Notes in Computer Science}
  {9634},
  {Springer},
  {2016},
  pp.
  {513--530}.
\newblock \url{https://doi.org/10.1007/978-3-662-49630-5_30}.
%======================================%
%Type = Inproceedings
\bibitem{GKOBU2016}
{M. Gaboardi}, 
{S.~y. Katsumata},
  {D. Orchard}, 
  {F. Breuvart},
  {T. Uustalu}.
\newblock {Combining effects and coeffects via grading}.
In:
  {21st ACM SIGPLAN
International Conference
on Functional Programming ICFP'16}, {ACM}, {2016},
  pp. 
  {476--489}.
\newblock \url{https://doi.org/10.1145/3022670.2951939}.
%======================================%
%Type = Inproceedings
\bibitem{Gaifman1989}
{H. Gaifman}.
\newblock {Modeling concurrency by partial orders and nonlinear
  transition systems}.
  In: 
  %{de~Bakker, J.W.},
  %{de~Roever, W.P.}, {Rozenberg, G.} (Eds.),
  {Linear Time, Branching Time and Partial Order in Logics and Models for Concurrency. REX 1988}, 
  {Lecture Notes in Computer Science}
  {354}
  {Springer, Berlin, Heidelberg},
{1989},
  pp. {467--488}.
\newblock \url{https://doi.org/10.1007/BFb0013031}.
%======================================%
%Type = Article
\bibitem{semicartesian2}
{M. Gerhold}, {S. Lachs},
  {M. Sch{\"u}rmann}.
  \newblock {Categorial independence and {L}{\'e}vy processes}.
\newblock {SIGMA} 
{18}
{(2022)}
  {075,1--27}.
\newblock \url{https://doi.org/10.3842/SIGMA.2022.075}.
%======================================%
%Type = Article
\bibitem{HansenKlin2011}
{H.~H. Hansen}, 
{B. Klin}.
\newblock {Pointwise extensions of {GSOS}-defined operations}.
\newblock {Mathematical Structures in Computer Science}
  {21}
    {(2011)}
{321--361}.
\newblock \url{https://doi.org/10.1017/S096012951000054X}.
%======================================%
%Type = Article
\bibitem{HasuoJacobsSokolovaCMCS2006}
{I. Hasuo}, {B. Jacobs},
  {A. Sokolova}.
  \newblock {Generic trace theory}.
  In: 
   {Coalgebraic Methods in Computer Science CMCS 2006}. 
\newblock {Electronic Notes in Theoretical Computer Science}
  {164} 
  {(2006)}
 {47--65}.
\newblock \url{https://doi.org/10.1016/j.entcs.2006.06.004}.
%======================================%
%Type = Article
\bibitem{HeunenKarvonen2015}
{C. Heunen}, 
{M. Karvonen}.
\newblock {Reversible monadic computing}. 
In:
{31st Conference on the Mathematical Foundations of Programming Semantics MFPS XXXI},
\newblock {Electronic Notes in Theoretical Computuer Science}
  {319} 
    {(2015)}
    {217--237}.
\newblock \url{https://doi.org/10.1016/j.entcs.2015.12.014}. 
%======================================%
%Type = Inproceedings
\bibitem{HughesVollmerOrchard2021}
{J. Hughes}, 
{M. Vollmer},
  {D. Orchard}.
  \newblock {Deriving distributive laws for graded linear types}.
  In: 
  {2nd Workshop on Linearity \& Trends in Linear Logic and Applications. Linearity\&TLLA 2020}, 
{Electronic Proceedings in Theoretical Computer Science}
{353},
  {2020},
pp. 
{109--131}.
\newblock \url{https://doi.org/10.4204/EPTCS.353.6}.
%======================================%
%Type = Article
\bibitem{JanelidzeKelly2001}
{G. Janelidze}, 
{G. M. Kelly}.
\newblock {A note on actions of a monoidal category}.
\newblock {Theory and Applications of Categories} 
{9}
  {(2001)}
  {61--91}.
\newblock \url{https://doi.org/10.70930/tac/raeo06rq}.
%======================================%
%Type = Inproceedings
\bibitem{JaskelioffRypacek2012}
{M. Jaskelioff}, {O. Rypacek}.
  \newblock {An investigation of the laws of traversals}.
  In:
  %{J. Chapman and P. B. Levy (eds.)},
  {4th Workshop on Mathematically Structured Functional Programming MSFP 2012}, 
  {Electronic Proceedings in Theoretical Computer Science}
  {76},
  {Open Publishing Association},
  {2012},
  pp. 
  {40--49}.
\newblock \url{https://doi.org/10.4204/EPTCS.76.5}.
%======================================%
%Type = Article
\bibitem{Johnstone1975}
	{P. T. Johnstone}.
\newblock 
	{Adjoint Lifting Theorems for Categories of Algebras}.
\newblock 
	{Bulletin of the London Mathematical Society}
	{7(3)}
	{(1975)}
	{294--297}.
\newblock 
\url{https://doi.org/10.1112/blms/7.3.294}.
%======================================%
%Type = Inproceedings
\bibitem{Katsumata2014}
{S.~y. Katsumata}.
\newblock {Parametric effect monads and semantics of effect
  systems}
  In: {41st ACM SIGPLAN-SIGACT Symposium on Principles of Programming Languages POPL '14}, 
  {ACM, New York}, 
  {2014},
  pp.
  {633--645}.
\newblock \url{https://doi.org/10.1145/2535838.2535846}.
%======================================%
%Type = Inproceedings
\bibitem{KellisonHsu2024}
{A.~E. Kellison}, 
{J. Hsu}.
\newblock {Numerical {F}uzz: A type system for rounding error
  analysis}.
\newblock In: {Proceedings of the ACM on Programming Languages},
{volume 8, issue PLDI},
{ACM, New York},
  {2024},
pp.
{1954--1978}.
\newblock \url{https://doi.org/10.1145/3656456}.
%======================================%
%Type = Inproceedings
\bibitem{Kelly-Doctrinal}
{G. M. Kelly}.
\newblock 
{Doctrinal adjunction}.
In:
  {Category Seminar}.
  {Lecture Notes in Mathematics}
  {420},
  {Springer, Berlin, Heidelberg}, 
  {1974},
  pp. 
  {257--280}.
\newblock \url{https://doi.org/10.1007/BFb0063105}.
%======================================%
%Type = Article
\bibitem{Kelly1980}
{G. M. Kelly}.
\newblock {A unified treatment of transfinite constructions for
  free algebras, free monoids, colimits, associated sheaves, and so on}.
\newblock {Bulletin of the Australian Mathematical Society}
  {22}
  {(1980)}
  {1--83}.
\newblock \url{https://doi.org/10.1017/S0004972700006353}.
%======================================%
%Type = Article
\bibitem{Kelly-book}
{G. M. Kelly}.
\newblock {Basic concepts of enriched category theory}.
\newblock {Reprints in Theory and Applications of Categories} 
{10},
{2005},
  {1--136}.
  %======================================%
%Type = Article
\bibitem{Klin2011}
{B. Klin}.
\newblock {Bialgebras for structural operational semantics: An introduction}.
\newblock {Theoretical Computer Science} {412}
{(2011)}
  {5043--5069}.
\newblock \url{https://doi.org/10.1016/j.tcs.2011.03.023}.
%======================================%
%Type = Article
\bibitem{Kock1972}
{A. Kock}.
\newblock {Strong functors and monoidal monads}.
\newblock {Archiv der Mathematik} {23}
{(1972)}
  {113--120}.
\newblock \url{https://doi.org/10.1007/BF01304852}.
%======================================%
%Type = Article
\bibitem{Lawvere1973}
{F.~W. Lawvere}.
\newblock {Metric spaces, generalized logic and closed categories}.
\newblock {Rendiconti del Seminario Matematico e Fisico di Milano} 
{43}
 {(1973)}
   {135--166}.
\newblock \url{https://doi.org/10.1007/BF02924844}. 
%======================================%
%Type = Article
\bibitem{LPW2000}
{M. Lenisa}, {J. Power},
  {H. Watanabe}.
\newblock {Distributivity for endofunctors, pointed and
  co-pointed endofunctors, monads and comonads}.
\newblock {Electronic Notes in Theoretical Computer Science}
  {33}
  {(2000)} 
  {230--260}.
\newblock \url{https://doi.org/10.1016/S1571-0661(05)80350-0}.
%======================================%
%Type = Article
\bibitem{LucatelliNunes2019}
{F. Lucatelli~Nunes}.
\newblock {Pseudoalgebras and non-canonical isomorphisms}.
\newblock {Applied Categorical Structures} {27}
{(2019)}
  {55--63}.
\newblock \url{https://doi.org/10.1007/s10485-018-9541-3}.
%======================================%
%Type = Book
\bibitem{MacLane1998}
{S. Mac~Lane}.
\newblock {Categories for the working mathematician. 2nd ed}. 
{Graduate Texts in Mathematics}
{5}, 
\newblock {Springer-Verlag, New York}.
  {1998}.
\newblock \url{https://doi.org/10.1007/978-1-4757-4721-8}.
%======================================%
%Type = Article
\bibitem{ManesMulry2007}
	{E. Manes}, 
	{Ph. Mulry}.
\newblock
	{Monad compositions. I: General constructions and recursive distributive laws}.
\newblock
	{Theory and Applications of Categories}
	{18}
	{(2007)}
	{172--208}.
\newblock
	\url{https://doi.org/10.70930/tac/4mxrcndw}.
%======================================%
%Type = Article
\bibitem{McBridePaterson2008}
{C. McBride}, {R. Paterson}.
\newblock {Applicative programming with effects}.
\newblock {Journal of Functional Programming} 
{18}
{(2008)}
  {1--13}.
\newblock \url{https://doi.org/10.1017/S0956796807006326}.
%======================================%
%Type = Article
\bibitem{McCrudden2000}
{P. McCrudden}.
\newblock {Categories of representations of coalgebroids}.
\newblock {Advances in Mathematics} {154}
{(2000)}
  {299--332}.
\newblock \url{https://doi.org/10.1006/aima.2000.1926}.
%======================================%
%Type = Inproceedings
\bibitem{McDermottUustalu2022a}
{D. McDermott}, {T. Uustalu}.
\newblock {Flexibly graded monads and graded algebras}.
In:
  {Mathematics of Program Construction MPC 2022}, 
  {Lecture Notes in Computer Science}
  {13544},
  {Springer, Cham}, 
    {2022},
    pp.
  {102--128}.
\newblock \url{https://doi.org/10.1007/978-3-031-16912-0_4}.
%======================================%
%Type = Inproceedings
\bibitem{McDermottUustalu2022}
{D. McDermott}, {T. Uustalu}.
\newblock {What makes a strong monad?}.
In:
  {9th Workshop on
Mathematically Structured Functional Programming MSFP 2022},
{ Electronic Proceedings in Theoretical Computer Science} 
{360}, 
  {2022},
  pp. 
  {113--133}.
\newblock \url{https://doi.org/10.4204/EPTCS.360.6}.
%======================================%
%Type = Misc
\bibitem{Mellies2012}
{P. A. Melli{\`e}s}.
\newblock {Parametric monads and enriched adjunctions}, 
{2012}.
\newblock {Available at \url{https://www.irif.fr/~mellies/tensorial-logic/8-parametric-monads-and-enriched-adjunctions.pdf}}.
%======================================%
%Type = Inproceedings
\bibitem{MiliusPattinsonSchroder-calco2015}
	{S. Milius}, 
	{D. Pattinson},
	{L. Schr\"oder}.
\newblock 
	{Generic Trace Semantics and Graded Monads}. 
	In:
	{6th Conference on Algebra and Coalgebra in Computer Science CALCO 2015},
	{Leibniz International Proceedings in Informatics (LIPIcs)}
	{35},
	{Schloss Dagstuhl - Leibniz-Zentrum für Informatik},
	{2015}, 
  	pp. 
	{253--269}.
\newblock 
	\url{https://doi.org/10.4230/LIPIcs.CALCO.2015.253}. 
%======================================%
%Type = Article
\bibitem{Moerdijk2002}
{I. Moerdijk}
\newblock {Monads on tensor categories}.
\newblock {Journal of Pure and Applied Algebra} 
{168}
{(2002)}
{189--208}.
\newblock \url{https://doi.org/10.1016/S0022-4049(01)00096-2}.
%======================================%
%Type = Inproceedings
\bibitem{Moggi1989}
{E. Moggi}.
\newblock {Computational lambda-calculus and monads}.
In:
  {4th Annual Symposium on Logic in Computer Science LICS 1989}, 
  {IEEE}, 
  {1989},
  pp.
  {14--23}.
\newblock \url{https://doi.org/10.1109/LICS.1989.39155}.
%======================================%
%Type = Article
\bibitem{Moggi1991}
{E. Moggi}.
\newblock {Notions of computation and monads}.
\newblock {Information and Computation} {93}
{(1991)}
  {55--92}.
\newblock \url{https://doi.org/10.1016/0890-5401(91)90052-4}.
%======================================%
%Type = Book
\bibitem{NiuSpivak}
{N. Niu}, {D. I. Spivak}. \newblock {Polynomial functors. {A} mathematical theory of
  interaction}. 
  {London Mathematical Society Lecture Notes Series}
  {498}, 
\newblock {Cambridge University Press},  {2025}.
\newblock \url{https://doi.org/10.1017/9781009576734}.
%======================================%
%Type = Article
\bibitem{Pareigis1977}
{B. Pareigis}.
\newblock {Non-additive ring and module theory. {II}:
  {{\(\mathcal C\)}}-categories, {{\(\mathcal C\)}}-functors and {{\(\mathcal C\)}}-morphisms}.
\newblock {Publicationes Mathematicae Debrecen} {24}
{(1977)}
  {351--361}.
\newblock \url{https://doi.org/10.5486/pmd.1977.24.3-4.17}.
%======================================%
%Type = Article
\bibitem{PastroStreet2008}
{C. Pastro}, {R. Street}.
  \newblock {Doubles for monoidal categories}.
\newblock {Theory Applic. Categ.} {21}
{(2008)}
{61--75}.
\newblock \url{https://doi.org/10.70930/tac/87sl7nfs}.
%======================================%
%Type = Article
\bibitem{JaskelioffRivas2017}
{E. Rivas}, {M. Jaskelioff}.
\newblock {Notions of computation as monoids}.
\newblock {Journal of Functional Programming}
  {27}
   {(2017)} 
   p. 
   {e21}.
\newblock \url{https://doi.org/10.1017/S0956796817000132}.
%======================================%
%Type = Misc
\bibitem{Roman2019}
{M. Rom\'an}.
\newblock {Profunctor optics and traversals},
{2020}.
{Available at \href{https://arxiv.org/abs/2001.08045}{{\tt arXiv:2001.08045}}}.	
\newblock \url{https://doi.org/10.48550/arXiv.2001.08045}.
%======================================%
%Type = Article
\bibitem{Sato2018}
{T. Sato}.
\newblock {The {G}iry monad is not strong for the canonical symmetric monoidal closed structure on $\mathbf{Meas}$}.
\newblock {Journal of Pure and Applied Algebra} {222}
{(2018)}
  {2888--2896}.
\newblock \url{https://doi.org/10.1016/j.jpaa.2017.11.004}.
%======================================%
%Type = Misc
\bibitem{Skoda2004}
	{Z. Skoda}.
\newblock 
	{Distributive laws for actions of monoidal categories},
	{2004}.
	{Available at \href{http://arxiv.org/abs/math/0406310}{{\tt arXiv:math/0406310}}}.	
\newblock \url{https://doi.org/10.48550/arXiv.math/0406310}.
%======================================%
%Type = Inproceedings
\bibitem{solberg1994}
	{K. L. Solberg}, 
	{H. R. Nielson},
	{Nielson, F.}.
\newblock 
	{Strictness and totality analysis}.
	In:
	%{Le Charlier, B. (ed.)}
	{Static Analysis. SAS 1994}, 
	{Lecture Notes in Computer Science}
	{864},
	{Springer, Berlin, Heidelberg}, 
	{1994}, 
	pp.
	{408--422}.
\newblock 
	\url{https://doi.org/10.1007/3-540-58485-4_55}.
%======================================%
%Type = Article
\bibitem{Street1972}
	{R. Street}.
\newblock 
	{Two constructions on lax functors}.
\newblock 
	{Cahiers de Topologie et G{\'e}om{\'e}trie Diff{\'e}rentielle Cat{\'e}goriques}
	{13}
	{(1972)}
	{217--264}.
%======================================%
%Type = Article
\bibitem{WadlerThiemann2003}
	{Ph. Wadler}, 
	{P. Thiemann}.
\newblock 
	{The marriage of effects and monads}.
\newblock 
	{ACM Transactions on Computational Logic}
	{4}
	{(2003)}
	{1--32}
\newblock 
	\url{https://doi.org/10.1145/601775.601776}.
%======================================%

\end{thebibliography}

%======================================%

\end{document}